\providecommand{\tabularnewline}{\\}
\newcommand{\lyxdot}{.}
\numberwithin{equation}{section}
\numberwithin{figure}{section}
\numberwithin{table}{section}
  \theoremstyle{plain}
  \newtheorem{thm}{\protect\theoremname}[section]
  \theoremstyle{remark}
  \newtheorem{rem}{\protect\remarkname}[section]
  \theoremstyle{plain}
  \newtheorem{lem}{\protect\lemmaname}[section]
\let\myTOC\tableofcontents
\renewcommand\tableofcontents{%
  \pdfbookmark[1]{\contentsname}{}
  \myTOC
}
\def\LyX{\texorpdfstring{%
  L\kern-.1667em\lower.25em\hbox{Y}\kern-.125emX\@}
  {LyX}}
\renewcommand*{\backref}[1]{}
\renewcommand*{\backrefalt}[4]{%
   \ifcase #1 
    \or 
      (Cited on page~#2)%
   \else
      (Cited on pages~#2)
    \fi} 
\definecolor{gr1}{rgb}{0.33333333,  0.65882353,  0.40784314}
\definecolor{gr2}{rgb}{0.52            ,  0.6854902  ,  0.42666667}
\definecolor{gr3}{rgb}{0.70666667,  0.71215686,  0.4454902  }
\definecolor{gr4}{rgb}{0.79372549,  0.64156863,  0.42823529}
\definecolor{gr5}{rgb}{0.78117647,  0.47372549,  0.37490196}
\definecolor{gr6}{rgb}{0.76862745,  0.30588235,  0.32156863}
  \providecommand{\lemmaname}{Lemma}
  \providecommand{\remarkname}{Remark}
\providecommand{\theoremname}{Theorem}
\begin{document}

\title{\noindent Switching to non-affine stochastic volatility:\\
A closed-form expansion for the Inverse Gamma model}

\author{{\normalsize{}\vspace{-2mm}
}\\
Nicolas Langren\'e\\
\texttt{\textsc{\small{}The Commonwealth Scientific and Industrial
Research Organisation}}\texttt{\small{}}~\\
\textsl{\small{}Real Options and Financial Risk}{\normalsize{}}\\
\texttt{\normalsize{}\href{mailto:nicolas.langrene@csiro.au}{nicolas.langrene@csiro.au}}{\normalsize{}\vspace{3mm}
}\\
Geoffrey Lee\\
\texttt{\textsc{\small{}The Commonwealth Scientific and Industrial
Research Organisation}}\texttt{\small{}}~\\
\textsl{\small{}Real Options and Financial Risk}\texttt{\small{}}~\\
\texttt{\normalsize{}\href{mailto:geoffrey.lee@csiro.au}{geoffrey.lee@csiro.au}}{\normalsize{}\vspace{3mm}
}\\
Zili Zhu\\
\texttt{\textsc{\small{}The Commonwealth Scientific and Industrial
Research Organisation}}\texttt{\small{}}~\\
\textsl{\small{}Real Options and Financial Risk}\texttt{\small{}}~\\
\texttt{\normalsize{}\href{mailto:zili.zhu@csiro.au}{zili.zhu@csiro.au}}{\normalsize{}\vspace{1mm}
}}

\date{{\normalsize{}\vspace{1mm}
First version: July 8, 2015}\\
{\normalsize{}This revised version: March 18, 2016}}
\maketitle
\begin{abstract}
\noindent This paper introduces the Inverse Gamma (IGa) stochastic
volatility model with time-dependent parameters, defined by the volatility
dynamics $dV_{t}=\kappa_{t}\left(\theta_{t}-V_{t}\right)dt+\lambda_{t}V_{t}dB_{t}$.\vspace{1mm}
\\
This non-affine model is much more realistic than classical affine
models like the Heston stochastic volatility model, even though both
are as parsimonious (only four stochastic parameters). Indeed, it
provides more realistic volatility distribution and volatility paths,
which translate in practice into more robust calibration and better
hedging accuracy, explaining its popularity among practitioners.\vspace{1mm}
\\
In order to price vanilla options with IGa volatility, we propose
a closed-form volatility-of-volatility expansion. Specifically, the
price of a European put option with IGa volatility is approximated
by a Black-Scholes price plus a weighted combination of Black-Scholes
greeks, where the weights depend only on the four time-dependent parameters
of the model.\vspace{1mm}
\\
This closed-form pricing method allows for very fast pricing and calibration
to market data. The overall quality of the approximation is very good,
as shown by several calibration tests on real-world market data where
expansion prices are compared favorably with Monte Carlo simulation
results.\vspace{1mm}
\\
This paper shows that the IGa model is as simple, more realistic,
easier to implement and faster to calibrate than classical transform-based
affine models. We therefore hope that the present work will foster
further research on non-affine models like the Inverse Gamma stochastic
volatility model, all the more so as this robust model is of great
interest to the industry. 

\noindent \vspace{1mm}
\\
\textbf{Key words}: stochastic volatility, Inverse Gamma, volatility
expansion, closed-form pricing, log-normal, mean-reverting SABR\\

\noindent \textbf{JEL Classification}: G13, C63, C51, C32, C16, F31,
\textbf{MSC Classification}: 91G60, 41A58, 65C20
\end{abstract}

\section{Introduction}

\noindent The banking industry, especially in equity and foreign exchange,
is currently experiencing a shift away from affine stochastic volatility
models such as the Heston model, and towards non-affine stochastic
volatility models such as the Inverse Gamma model. Non-affine stochastic
volatility models have been shown to produce more realistic volatility
paths and volatility distributions, to capture more accurately the
dynamics of the market implied volatility surfaces, and to produce
more reliable calibrations, thus reducing the realized volatility
of delta-hedging P\&Ls.

\noindent Up until now, the popularity of affine models in spite of
their empirical inadequacy has been due to one thing: tractability.
Indeed, affine models provide quasi closed-form formulas for vanilla
option prices by transform methods, in contrast to non-affine models.
The purpose of the present work is to resolve this issue by presenting
a fast pricing method for a non-affine stochastic volatility model.
More precisely, we develop a closed-form expansion for the price of
vanilla options under the non-affine Inverse Gamma stochastic volatility
model, defined by the volatility dynamics $dV_{t}=\kappa_{t}\left(\theta_{t}-V_{t}\right)dt+\lambda_{t}V_{t}dB_{t}$.

\noindent Implementing this new closed-form expansion is straightforward,
and pricing speed is instantaneous. In fact, the closed-form expansion
approach is much easier and much faster than the transform methods
used for affine models. Moreover, our method is designed to deal naturally
with time-dependent parameters. This freedom for the term structure
of the model parameters makes the calibration process much easier
for various maturities.

\noindent We illustrate the accuracy of this closed-form expansion
method on several foreign exchange market data sets. The speed and
accuracy of the method make it ideal for industry use. The parameters
generated through the fast calibration procedure can be used to directly
price and hedge options under IGa stochastic volatility, but can also
be used in the calibration of more general local-stochastic volatility
models. 

The paper is organised as follows:
\begin{itemize}
\item Section \ref{sec:IGa_model} defines the Inverse Gamma stochastic
volatility model, discusses similar models in the literature, and
discusses the advantages the Inverse Gamma model has over other classical
one-factor stochastic volatility models.
\item Section \ref{sec:Fast_pricing} provides a closed-form volatility-of-volatility
expansion for the price of a European put option under Inverse Gamma
stochastic volatility. Furthermore, we provide an algorithm to easily
compute the expansion coefficients for piecewise constant parameters,
leading to fast calibrations.
\item Section \ref{sec:Numerical_experiments} provides several numerical
tests of the method on foreign exchange market data (AUD/USD, USD/JPY,
USD/SGD). In each example, the Inverse Gamma model is calibrated to
the whole implied volatility surface and the expansion prices are
compared to Monte Carlo prices. This allows us to assess both calibration
error and expansion error. 
\item Section \ref{sec:Conclusion} summarizes the presented methodology
and provides some of our plans for future work in this area. 
\end{itemize}

\section{The Inverse Gamma Stochastic volatility model\label{sec:IGa_model}}

This section defines the Inverse Gamma stochastic volatility model
and discusses its properties. In this paper we use notations specific
to foreign exchange (namely domestic and foreign interest rates),
but note that the model itself is not limited to foreign exchange
applications and can of course be readily used for other markets (equity,
fixed-income, etc.)

\subsection{Definition}

Denote $S_{t}$ and $V_{t}$ as an exchange rate and its instantaneous
volatility at time $t$, and $T$ the time horizon considered. The
dynamics of the Inverse Gamma (IGa) stochastic volatility model with
time-dependent parameters is given by
\begin{eqnarray}
dS_{t} & = & (r_{d}(t)-r_{f}(t))S_{t}dt+V_{t}S_{t}dW_{t}\nonumber \\
dV_{t} & = & \kappa_{t}\left(\theta_{t}-V_{t}\right)dt+\lambda_{t}V_{t}dB_{t}\label{eq:IGa}\\
d\left\langle W,B\right\rangle _{t} & = & \rho_{t}dt\nonumber 
\end{eqnarray}
where $\left(W_{t},B_{t}\right)_{0\leq t\leq T}$ is a two-dimensional
correlated Brownian motion, $r_{d}=\left(r_{d}(t)\right)_{0\leq t\leq T}$
is the domestic interest rate, and $r_{f}=\left(r_{f}(t)\right)_{0\leq t\leq T}$
is the foreign interest rate. There are four deterministic parameters:
\begin{itemize}
\item $\kappa=\left(\kappa_{t}\right)_{0\leq t\leq T}$ is the rate of mean
reversion of the volatility to the level $\theta$.
\item $\theta=\left(\theta_{t}\right)_{0\leq t\leq T}$ is the mean reversion
level of the volatility.
\item $\lambda=\left(\lambda_{t}\right)_{0\leq t\leq T}$ is the volatility
of volatility.
\item $\rho=\left(\rho_{t}\right)_{0\leq t\leq T}$ is the correlation between
the respective Brownian motions of the underlying $S$ and its volatility
$V$.
\end{itemize}
When the parameters are kept constant, the volatility \eqref{eq:IGa}
is driven by an Inverse Gamma process, yielding an inverse gamma distribution
for the stationary distribution of volatility (cf. Appendix \ref{sub:stationary_IGa}).
Thus, we denote this model as the \textit{Inverse Gamma stochastic
volatility model} (IGa model in short) with time-dependent parameters\footnote{See also Appendix \ref{sec:log_normal_really}.}.

\subsection{The IGa model in the literature\label{sub:litterature}}

A few classes of stochastic volatility models proposed in the literature
contain the IGa model with constant parameters as a particular case.
To make comparisons simpler, we use the same notations for the parameters
of each class ($\kappa$, $\theta$, $\lambda$, $\rho$) and we remove
the drift term from the dynamics of the underlying.
\begin{itemize}
\item The Power Arch (or PARCH) stochastic volatility model (\citet{Fornari01}),
\begin{eqnarray*}
dS_{t} & = & V_{t}S_{t}dW_{t}\\
dV_{t}^{p} & = & \kappa\left(\theta-V_{t}^{p}\right)dt+\lambda V_{t}^{p}dB_{t}
\end{eqnarray*}
corresponds to the IGa stochastic volatility model when $p=1$. Remark
that $p=2$ corresponds to the GARCH diffusion model (cf. Table \ref{tab:One_factor_models}).
\item The Double Log-Normal stochastic volatility model (\citet{Gatheral07,Gatheral08,HenryLabordere09})
with its two cointegrated variance factors:
\begin{eqnarray}
dS_{t} & = & \sqrt{V_{t}}S_{t}dW_{t}\nonumber \\
dV_{t} & = & \kappa\left(V_{t}^{'}-V_{t}\right)dt+\lambda V_{t}dB_{t}\label{eq:DoubleLogNormal}\\
dV_{t}^{'} & = & \kappa^{'}\left(\theta-V_{t}^{'}\right)dt+\lambda^{'}V_{t}^{'}dB_{t}^{'}\nonumber 
\end{eqnarray}
with correlations between the Brownian motions $W$, $B$ and $B^{'}$.
Indeed, the variance formulation of the IGa model can be reformulated
as follows:
\begin{eqnarray*}
dS_{t} & = & \sqrt{V_{t}}S_{t}dW_{t}\\
dV_{t} & = & \left(2\kappa\theta V_{t}^{'}-\left[2\kappa-\lambda^{2}\right]V_{t}\right)dt+2\lambda V_{t}dB_{t}\\
dV_{t}^{'} & = & \kappa\left(\theta-V_{t}^{'}\right)dt+\lambda V_{t}^{'}dB_{t}
\end{eqnarray*}
which is a particular case of Double Log-Normal stochastic volatility
with $100\%$ correlation between $B$ and $B^{'}$.
\item The $\lambda-\mathrm{SABR}$ model\footnote{which, with our notations, is more accurately described as a $\kappa-\mathrm{SABR}$
model.} (\citet{HenryLabordere08} Chapter 6), also known as mean-reverting
SABR:
\begin{eqnarray*}
dS_{t} & = & V_{t}S_{t}^{\beta}dW_{t}\\
dV_{t} & = & \kappa\left(\theta-V_{t}\right)dt+\lambda V_{t}dB_{t}
\end{eqnarray*}
One can see that the IGa model corresponds to the case $\beta=1$,
which, along with the case $\beta=1/2$, is often considered in practice
(see for example \citet{Shiraya11} or \citet{Shiraya14}).
\item The Generalized Inverse Gamma (GIGa) stochastic volatility model of
\citet{Ma14}:
\begin{eqnarray*}
dS_{t} & = & V_{t}S_{t}dW_{t}\\
dV_{t} & = & \kappa\left(\theta V_{t}^{1-\gamma}-V_{t}\right)dt+\lambda V_{t}dB_{t}
\end{eqnarray*}
The special case $\gamma=1$ corresponds to an IGa diffusion for the
volatility.
\item Finally, the closest model to \eqref{eq:IGa} in the literature is
the so-called ``Log-normal Beta stochastic volatility model'' of
\citet{Sepp14,Sepp15}:
\begin{eqnarray}
dS_{t} & = & V_{t}S_{t}dW_{t}\nonumber \\
dV_{t} & = & \kappa\left(\theta-V_{t}\right)dt+\beta V_{t}dW_{t}+\varepsilon V_{t}dB_{t}\label{eq:LN-Beta}\\
d\left\langle W,B\right\rangle _{t} & = & 0\nonumber 
\end{eqnarray}
The model \eqref{eq:LN-Beta} is in fact equivalent to the Inverse
Gamma model \eqref{eq:IGa}, as if $W$ and $B$ are correlated Brownian
motions (with correlation $\rho$), then $B=\rho W+\sqrt{1-\rho^{2}}W^{\bot}$
where $W^{\bot}$ is another Brownian motion, independent from $W$.
Therefore \eqref{eq:LN-Beta} is equivalent to \eqref{eq:IGa} with
$\beta=\lambda\rho$ and $\varepsilon=\lambda\sqrt{1-\rho^{2}}$.
For example, the typical equity case $\beta\approx-1$ and $\varepsilon\approx1$
mentioned in \citet{Sepp14,Sepp15} corresponds to a volatility of
volatility $\lambda=\sqrt{2}\approx1.41$ and a correlation $\rho=-1/\sqrt{2}\approx-0.71$.
\end{itemize}
Other classes of stochastic volatility that contain the IGa model
include $dV_{t}=\kappa\left(\theta-V_{t}\right)dt+\lambda V_{t}^{\eta}dB_{t}$
(\citet{Jerbi11}, IGa when $\eta=1$) and the general $dV_{t}=\left[q\left(t\right)V_{t}^{a}-s\left(t\right)V_{t}^{b}\right]dt+l\left(t\right)V_{t}^{\gamma+1}dB_{t}$
(\citet{Itkin13}, IGa when $q\equiv\kappa\theta$, $s\equiv\kappa$,
$l\equiv\lambda$, $a=0$, $b=1$, though they focus on the closed
forms that can be derived when $a=1$ and $b=2\gamma+1$, which excludes
the IGa model).

This list of models does suggest that the IGa model \eqref{eq:IGa}
is a sensible and reliable basis to model volatility, but also that
the most efficient way to parsimonously enrich the model is not clear
yet. This question is left for future research, and the rest of the
paper will focus on the IGa model \eqref{eq:IGa}.

\subsection{Other models\label{sub:Other-models}}

Over time, many stochastic volatility models have been proposed in
the literature. Table \ref{tab:One_factor_models} recalls some classical
one factor stochastic volatility models with mean-reversion and correlation
between volatility and underlying\footnote{$d\left\langle W,B\right\rangle _{t}=\rho dt$ in all models from
Table \ref{tab:One_factor_models} } with constant parameters. To make comparisons to the IGa model easier,
both volatility and variance formulations are given.\clearpage{}

\begin{table}[h]
\noindent \begin{raggedright}
\hspace{-3mm}%
\begin{tabular}[t]{>{\raggedright}p{21.6mm}|>{\raggedright}p{66.7mm}|>{\raggedright}p{68.1mm}}
Name & Volatility formulation & Variance formulation\tabularnewline[3mm]
 & $dS_{t}\!=(r_{d}-r_{f})S_{t}dt+V_{t}S_{t}dW_{t}$ & $dS_{t}\!=(r_{d}-r_{f})S_{t}dt+\sqrt{V_{t}}S_{t}dW_{t}$\tabularnewline[3mm]
\hline 
\vspace{3mm}
Sch\"obel-Zhu\footnotemark[1] & \vspace{3mm}
$dV_{t}\!=\kappa\left(\theta-V_{t}\right)dt+\lambda dB_{t}$ & \vspace{3mm}
$dV_{t}\!=\left(\lambda^{2}\!+2\kappa\theta\sqrt{V_{t}}-2\kappa V_{t}\right)\!dt+2\lambda\sqrt{V_{t}}dB_{t}$\tabularnewline[5mm]
Heston\footnotemark[2] & $dV_{t}\!=\left(\left[\frac{\kappa\theta}{2}-\frac{\lambda^{2}}{8}\right]\frac{1}{V_{t}}-\frac{\kappa}{2}V_{t}\right)dt+\frac{\lambda}{2}dB_{t}$ & $dV_{t}\!=\kappa\left(\theta-V_{t}\right)dt+\lambda\sqrt{V_{t}}dB_{t}$\tabularnewline[5mm]
3/2-model\footnotemark[3] & $dV_{t}\!=\left(\frac{\kappa\theta}{2}V_{t}-\left[\frac{\kappa}{2}+\!\frac{\lambda^{2}}{8}\right]\!V_{t}^{3}\right)\!dt+\frac{\lambda}{2}V_{t}^{2}dB_{t}$ & $dV_{t}\!=\kappa\left(\theta V_{t}-V_{t}^{2}\right)dt+\lambda V_{t}^{\frac{3}{2}}dB_{t}$\tabularnewline[5mm]
Log-Normal\footnotemark[4]$^{\negthinspace,\negthinspace}$\footnotemark[5] & $dV_{t}\!=\left(\left[\kappa\theta\!+\!\frac{\lambda^{2}}{2}\right]\!V_{t}-\kappa V_{t}\log(V_{t})\!\right)\!dt\!+\!\lambda V_{t}dB_{t}$ & $dV_{t}\!=\left(2\!\left[\kappa\theta\!+\!\lambda^{2}\right]\!V_{t}-\kappa V_{t}\!\log(V_{t})\!\right)\!dt\!+\!2\lambda V_{t}dB_{t}$\tabularnewline[5mm]
GARCH\footnotemark[6] & $dV_{t}\!=\left(\frac{\kappa\theta}{2}\frac{1}{V_{t}}-\left[\frac{\kappa}{2}-\!\frac{\lambda^{2}}{8}\right]\!V_{t}\right)\!dt+\frac{\lambda}{2}V_{t}dB_{t}$ & $dV_{t}\!=\kappa\left(\theta-V_{t}\right)dt+\lambda V_{t}dB_{t}$\tabularnewline[5mm]
\textbf{Inverse Gamma} & $dV_{t}\!=\kappa\left(\theta-V_{t}\right)dt+\lambda V_{t}dB_{t}$ & $dV_{t}\!=\left(2\kappa\theta\sqrt{V_{t}}-\left[2\kappa-\lambda^{2}\right]\!V_{t}\right)\!dt\!+\!2\lambda V_{t}dB_{t}$\tabularnewline[5mm]
\end{tabular}
\par\end{raggedright}

\protect\caption{\label{tab:One_factor_models}One factor stochastic volatility models}

\end{table}

\footnotetext[1]{\citet{Schobel99}}

\footnotetext[2]{\citet{Heston93}}

\footnotetext[3]{\citet{Lewis00}}

\footnotetext[4]{\citet{Wiggins87}}

\footnotetext[5]{A more natural definition is $dS_{t}=(r_{d}-r_{f})S_{t}dt+e^{V_{t}}S_{t}dW_{t}$
with $dV_{t}=\kappa\left(\theta-V_{t}\right)dt+\lambda dB_{t}$. See
also Appendix \ref{sec:log_normal_really}.}

\footnotetext[6]{\citet{Lewis00}}

\footnotetext[7]{using \citet{Rojo96}'s tail classification for
example}

Among the models listed in Table \ref{tab:One_factor_models}, the
models of Sch\"obel-Zhu and Heston are affine, which means that the
Fourier transform of the log-price can be computed explicitly. Because
of their tractability, affine models have received a lot of attention
in the literature, at the expense of the non-affine stochastic volatility
models. Unfortunately, empirical analyzes suggest that the dynamics
of market volatilities is much better described by non-affine models.
Let us illustrate this point by comparing the Heston model and the
Inverse Gamma model.

Figure \ref{fig:vol_dens} displays the stationary distribution of
the volatility under the Heston and Inverse Gamma models, with same
mean ($0.30$) and same standard deviation ($0.08$ on Figures \ref{fig:vol_std_0.08}
and \ref{fig:vol_std_0.08_log}, $0.16$ on Figures \ref{fig:vol_std_0.16}
and \ref{fig:vol_std_0.16_log}, $0.24$ on Figures \ref{fig:vol_std_0.24}
and \ref{fig:vol_std_0.24_log}), using the results from Appendix
\ref{sec:stationary_vol}.

On the one hand, the volatility distribution in the affine models
from Table \ref{tab:One_factor_models} (Sch\"obel-Zhu, Heston) has
a short right tail\footnotemark[7], while it has a more realistic
long right tail in the non-affine models (3/2 model, Log-Normal, GARCH,
and Inverse Gamma). Figures \ref{fig:vol_std_0.08_log}, \ref{fig:vol_std_0.16_log}
and \ref{fig:vol_std_0.24_log} (in log-scale), illustrate this difference
between Heston and Inverse Gamma. The right tail of the Heston volatility
decreases more quickly than that of the IGa volatility, therefore
there is always a volatility level upon which the Heston volatility
falls forever below the IGa one.

The left tail is also better described with non-affine models. For
example, with the Heston model, the volatility can reach zero if the
parameters ($\kappa$, $\theta$, $\lambda$, $\rho$) do not satisfy
the Feller condition ($2\kappa\theta/\lambda^{2}>1$). The effect
of this condition on the left tail can be seen from Figures \ref{fig:vol_std_0.08}
and \ref{fig:vol_std_0.08_log} ($2\kappa\theta/\lambda^{2}=3.63$)
to Figures \ref{fig:vol_std_0.16} and \ref{fig:vol_std_0.16_log}
($2\kappa\theta/\lambda^{2}=0.96$) to Figures \ref{fig:vol_std_0.24}
and \ref{fig:vol_std_0.24_log} ($2\kappa\theta/\lambda^{2}=0.49$).
One can clearly see how the distribution piles up close to zero, to
the point where zero becomes the most likely value for the volatility
(Figure \ref{fig:vol_std_0.24}). Unfortunately, the Feller condition
is almost always violated in practice (\citet{Clark11}, \citet{Fonseca11},
\citet{Ribeiro13}, $\ldots$), which means that Figure \ref{fig:vol_std_0.24}
represents the normal behavior of the Heston model on real data.

\hspace{-3mm}
\begin{figure}[!th]
\subfloat[\label{fig:vol_std_0.08}mean $=0.30$, standard deviation $=0.08$]{\protect\includegraphics[width=0.38\paperwidth]{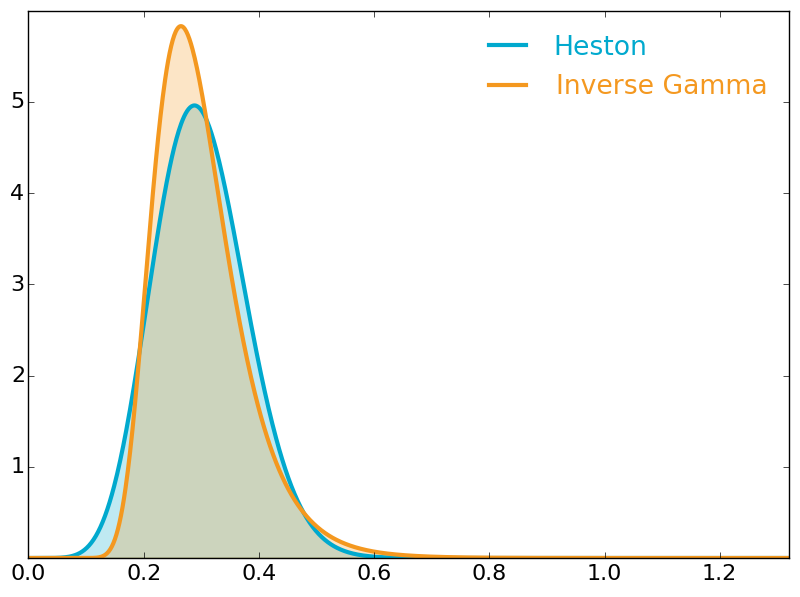}

}\subfloat[\label{fig:vol_std_0.08_log}mean $=0.30$, standard deviation $=0.08$,
log scale]{\protect\includegraphics[width=0.38\paperwidth]{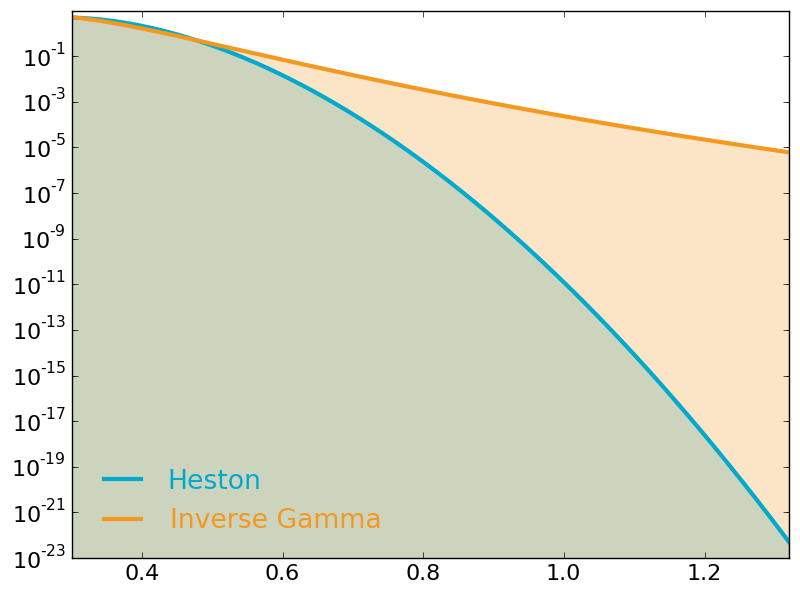}

}\\
\subfloat[\label{fig:vol_std_0.16}mean $=0.30$, standard deviation $=0\lyxdot{} 16$]{\protect\includegraphics[width=0.38\paperwidth]{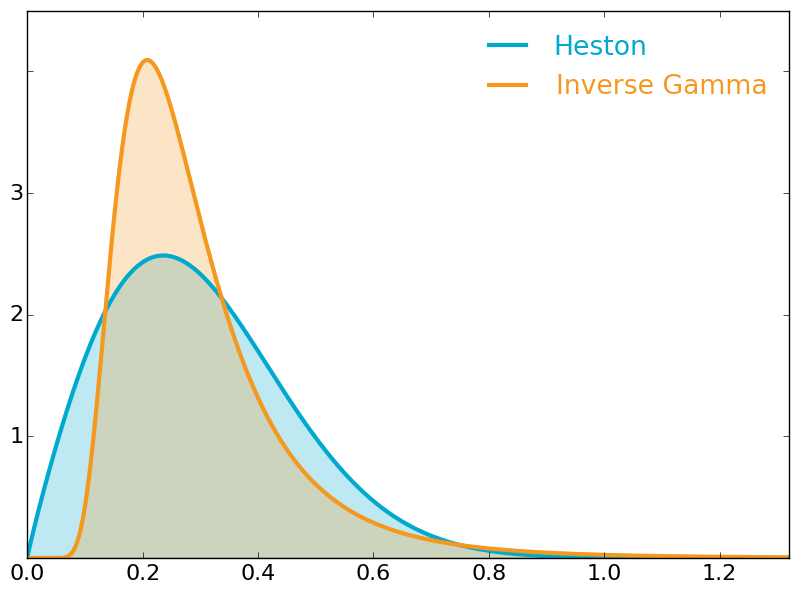}

}\subfloat[\label{fig:vol_std_0.16_log}mean $=0.30$, standard deviation $=0.16$,
log scale]{\protect\includegraphics[width=0.38\paperwidth]{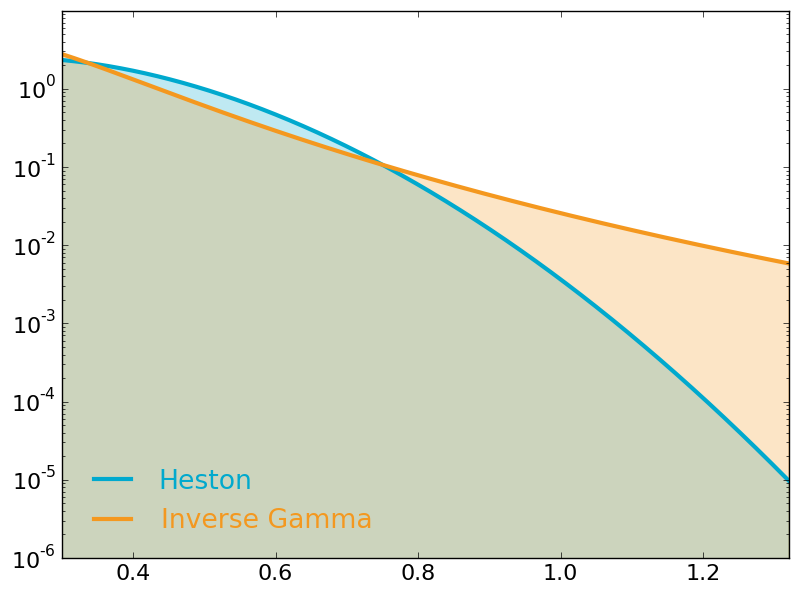}

}\\
\subfloat[\label{fig:vol_std_0.24}mean $=0.30$, standard deviation $=0.24$]{\protect\includegraphics[width=0.38\paperwidth]{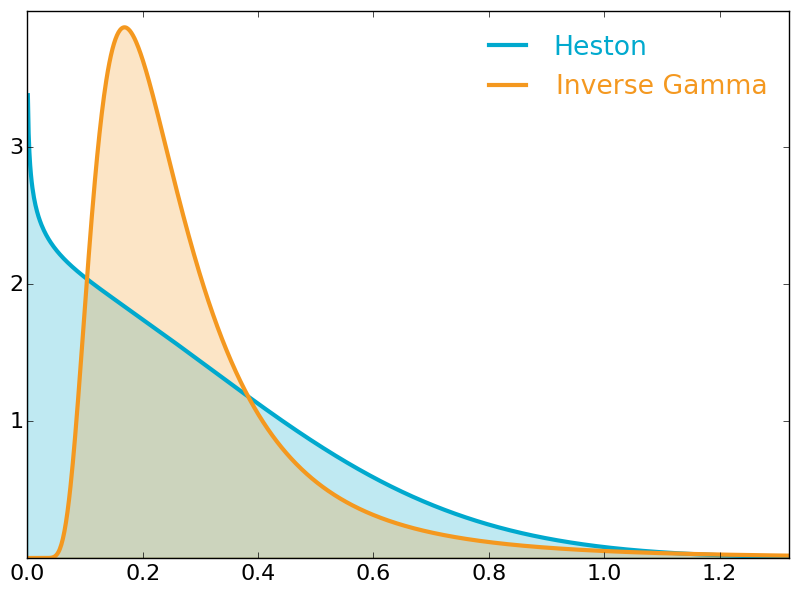}

}\subfloat[\label{fig:vol_std_0.24_log}mean $=0.30$, standard deviation $=0.24$,
log scale]{\protect\includegraphics[width=0.38\paperwidth]{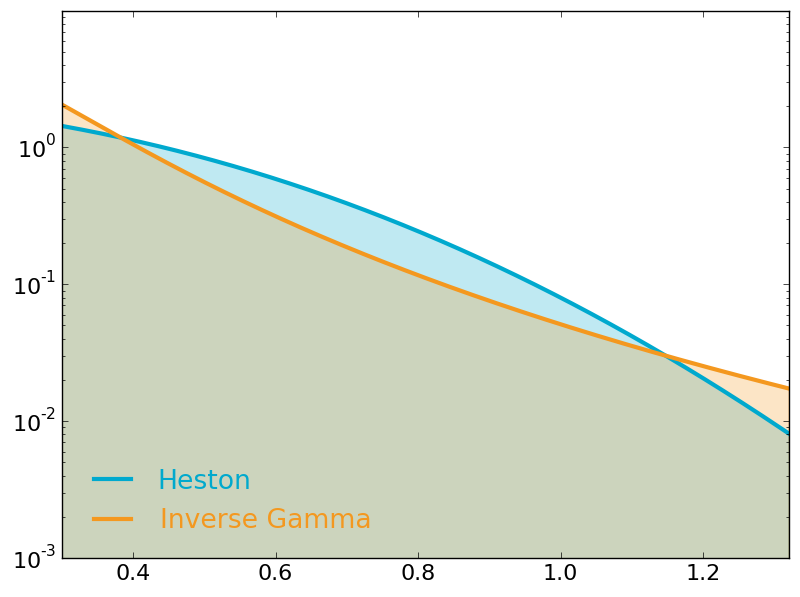}

}

\protect\caption{\label{fig:vol_dens}Volatility density}

\end{figure}

\subsubsection{Empirical evidence}

Figure \ref{fig:vol_dens} strongly suggests that the IGa model \ref{eq:IGa},
while being as parsimonious as Heston (four parameters $\kappa$,
$\theta$, $\lambda$, $\rho$), provides a much better description
of realized and implied market volatility. Indeed, a number of empirical
studies suggest that a non-affine stochastic volatility model of the
type \eqref{eq:IGa} compares favorably to other possible models,
especially to affine models such as the Heston model. We can summarize
these studies as listed below.

In equity markets:
\begin{itemize}
\item In \citet{Bouchaud03} (Chapter 7), an empirical analysis of the volatility
of the S\&P 500 in the period 1990-2001 is performed. The distribution
of the volatility is shown to be accurately fitted by two distributions:
a log-normal distribution, and an inverse gamma distribution. Overall,
the best fit is provided by the inverse gamma distribution, especially
on the right tail of the volatility distribution (Figures 7.7 p.118
and 7.8 p.119). When discussing the shortcomings of the Heston stochastic
volatility models, the authors explicitly say that ``the empirical
distribution of the volatility is closer to an \textit{inverse} gamma
distribution than to a gamma distribution'' (p.143).
\item \citet{Gander07} test several possible distributions for the volatility
(Tempered stable, Generalized Inverse Gaussian (including Gamma),
Positive Hyperbolic, Inverse Gaussian and Inverse Gamma) on $14$
stocks traded on the New York Stock Exchange. It is shown that the
Inverse Gamma distribution provides the best fit for option pricing.
\item \citet{Gatheral08} shows that the Double Lognormal stochastic volatility
model (which contains \eqref{eq:IGa} as a special case) fits SPX
and VIX options much better than Double Heston, with stable parameters.
\item \citet{Christoffersen10} show, on S\&P 500 returns, VIX options and
OTM S\&P 500 index option data, between $1996$ and $2004$, that
the GARCH diffusion model (Inverse Gamma variance) significantly outperforms
several other models including the Heston and the $3/2$ models. (They
only consider the volatility parameterization (not variance) $dV_{t}=\left(\frac{\kappa\theta}{2}V_{t}^{2a-1}-\frac{\kappa}{2}V_{t}^{2a+1}-\frac{\lambda^{2}}{8}V_{t}^{4b-3}\right)dt+\frac{\lambda}{2}V_{t}^{2b-1}dB_{t}$,
which does not contain the Inverse Gamma model.)
\item Furthermore, \citet{Kaeck12} show that allowing for non-affine dynamics
(like \eqref{eq:IGa}) is more important than the inclusion of jumps.
In particular, augmenting an affine model such as the Heston model
with jumps leads to a stochastic volatility model that is still significantly
inferior to a more parsimonious non-affine model like GARCH without
jumps.
\item Finally, \citet{Ma14} analyze the volatility of the S\&P 100, S\&P
500 and DJIA, as well as the VIX index between $1990$ and $2014$.
They show that the Generalized Inverse Gamma distribution (which contain
the Inverse Gamma distribution) fits volatility best.
\end{itemize}
In fixed-income markets:
\begin{itemize}
\item \citet{Fornari01} calibrate the Power Arch model to futures contracts
on the Italian 10-year government bond, between 1991 to 1997. The
power $p$ is estimated on three different subsamples. The results,
$0.86$, $0.99$ and $1.19$, are very close to an IGa volatility
($p=1$). In particular, the GARCH diffusion model ($p=2$) is rejected. 
\item \citet{Fornari06} then study a slightly different stochastic volatility
model, of the type $dV_{t}^{p}=\kappa(\theta-V_{t}^{p})dt+\lambda V_{t}^{\eta p}dB_{t}$,
with two parameters $p$ and $\eta$. They fit these two parameters
to weekly $3$-month US Treasury bills rates between 1973 and 1995.
Their estimates, $\hat{p}=1.0326$ and $\hat{\eta}=1.0014$, are statistically
indistinguishable from $1$ (Inverse Gamma volatility). Remark that
the volatility term for the asset $S$ (which is an interest rate
in there case) is of the form $V_{t}\sqrt{S_{t}}dW_{t}$, which is
different from \eqref{eq:IGa}. However, they state that a volatility
of the form $V_{t}\left|S_{t}\right|^{d}dW_{t}$, $d\geq0.5$ (which
contains \eqref{eq:IGa} for $d=1$), would not dramatically change
their empirical results.
\end{itemize}
Beyond better calibration and more realistic volatility distribution
and volatility paths, a model such as \ref{eq:IGa} generates a more
realistic dynamics for the implied volatility surface (\citet{Tataru12}),
which reduces the tracking volatility in hedging P\&Ls (\citet{Sepp15}),
and drastically improves portfolio allocation (\citet{Hansis10}).

\section{Closed-form expansion for fast option pricing\label{sec:Fast_pricing}}

In the previous section, a number of reasons have been documented
in favor of a non-affine Inverse Gamma stochastic volatility model
over affine models. However, the main reason why affine models are
used in practice is not their realism but their tractability. Indeed
their Fourier (or Laplace) transform is available in closed form,
which makes pricing possible by inverse transform. Non-affine models
such as the IGa model do not have a closed form solution for their
Fourier transform, which makes them \textit{a priori} less tractable.

In \citet{Sepp14}, an affine moment-matching approximation of the
moment generation function for \eqref{eq:IGa} is proposed, making
pricing possible by inverse transform. The advantage of this approach
is that jumps can be factored in during the matching as well. Here
we propose a more straightforward approach, namely a closed-form volatility-of-volatility
expansion for vanilla options prices. Compared to \citet{Sepp14},
the main advantages of our approach are that:
\begin{itemize}
\item It is simpler. If an approximation is to be made, it is more straightforward
and intuitive to approximate the price directly rather than a transform
of the price. This also makes pricing and subsequent calibration much
faster.
\item No moment matching is required. Firstly, matching moments with an
affine model can create unexpected problems (recall Figure \ref{fig:vol_std_0.24}).
Secondly, for most stochastic volatility models, moments higher than
$1$ cease to exist for large maturities (\citet{Andersen07}). In
\citet{Sepp14}, the author advises against going beyond a second-order
approximation, as only $5$ moments were shown to exist. Our approach
does not have such limitation on the order of approximation.
\item Importantly, our approximation approach is naturally suitable for
time-dependent parameters.
\end{itemize}
In the following subsection \ref{sub:closed_form_expansion}, we provide
a closed-form expansion for the price of a European put option under
Inverse Gamma volatility \eqref{eq:IGa} with time-dependent parameters
(equation \eqref{eq:IGa_put_expansion}). The method is based on the
methodology developed in \citet{Benhamou10}, adapted to the Inverse
Gamma model and extended to time-dependent $\kappa$. In \citet{Benhamou10},
the closed-form volatility of volatility expansion methodology was
applied to the Heston stochastic volatility with time-dependent parameters.
It was shown to be very accurate, and much faster than Fourier methods. 

The coefficients of the closed-form expansion are explicitly given
by time integrals of the (time-dependent) parameters (equation \eqref{eq:coefficients}).
Importantly, any shape for the time evolution of the parameters can
be handled. In practice though, piecewise-constant parameters can
be seen as a good compromise between richness and tractability. Thus,
we present generic recusions formulas for the coefficients \eqref{eq:coefficients}
when the model parameters are piecewise constant. These generic recursions
are easy to implement, and can be used for expansion coefficients
to any order.

Finally, in order to compare our prices to another method, and as
transform methods are not available, we explain how to implement an
efficient Monte Carlo scheme to price options under IGa volatility.
\newpage{}

\subsection{Closed-form expansion\label{sub:closed_form_expansion}}
\begin{thm}
\label{thm:IGa_expansion}The second-order expansion for the price
$P_{IGa}=P_{IGa}\left(S_{0},K,T,r_{d},r_{f};\kappa,\theta,\lambda,\rho\right)$
of a European put option with Inverse Gamma volatility is explicitly
given by{\large{}
\begin{equation}
P_{IGa}=P_{BS}\left(x_{0},\psi_{T}\right)+\sum_{i=0}^{2}a_{i,T}\frac{\partial^{i+1}}{\partial x^{i}y}P_{BS}\left(x_{0},\psi_{T}\right)+\sum_{i=0}^{1}b_{2i,T}\frac{\partial^{2i+2}}{\partial x^{2i}y^{2}}P_{BS}\left(x_{0},\psi_{T}\right)+\mathcal{E}\label{eq:IGa_put_expansion}
\end{equation}
}{\large \par}

with{\large{}
\begin{eqnarray}
x_{0} & = & \log\left(S_{0}\right)\nonumber \\
v_{0,t} & = & e^{-\int_{0}^{t}\kappa_{z}dz}\left(v_{0}+\int_{0}^{t}\kappa_{s}\theta_{s}e^{\int_{0}^{s}\kappa_{z}dz}ds\right)\\
\psi_{T} & = & \int_{0}^{T}v_{0,t}^{2}dt\\
a_{0,T} & = & \int_{0}^{T}e^{\int_{0}^{s}2\kappa_{z}dz}\lambda_{s}^{2}v_{0,s}^{2}ds\int_{s}^{T}e^{-\int_{0}^{t}2\kappa_{z}dz}dt\\
a_{1,T} & = & 2\int_{0}^{T}e^{\int_{0}^{s}\kappa_{z}dz}\rho_{s}\lambda_{s}v_{0,s}^{2}ds\int_{s}^{T}e^{-\int_{0}^{t}\kappa_{z}dz}v_{0,t}dt\nonumber \\
a_{2,T} & = & 2\int_{0}^{T}e^{\int_{0}^{s}\kappa_{z}dz}\rho_{s}\lambda_{s}v_{0,s}^{2}ds\int_{s}^{T}2\rho_{t}\lambda_{t}v_{0,t}dt\int_{t}^{T}e^{-\int_{0}^{u}\kappa_{z}dz}v_{0,u}du\nonumber \\
 & + & 2\int_{0}^{T}e^{\int_{0}^{s}\kappa_{z}dz}\rho_{s}\lambda_{s}v_{0,s}^{2}ds\int_{s}^{T}e^{\int_{0}^{t}\kappa_{z}dz}\rho_{t}\lambda_{t}v_{0,t}^{2}dt\int_{t}^{T}e^{-\int_{0}^{u}2\kappa_{z}dz}du\nonumber \\
b_{0,T} & = & 4\int_{0}^{T}e^{\int_{0}^{s}2\kappa_{z}dz}\lambda_{s}^{2}v_{0,s}^{2}ds\int_{s}^{T}e^{-\int_{0}^{t}\kappa_{z}dz}v_{0,t}dt\int_{t}^{T}e^{-\int_{0}^{u}\kappa_{z}dz}v_{0,u}du\nonumber \\
b_{2,T} & = & \frac{a_{1,T}^{2}}{2}\label{eq:coefficients}
\end{eqnarray}
}{\large \par}

where $P_{BS}\left(x,y\right)=P_{BS}\left(x,y;K,T,r_{d},r_{f}\right)$
is the Black-Scholes put price with spot $e^{x}$ and integrated variance
$y$,
\begin{eqnarray}
P_{BS}\left(x,y\right) & = & Ke^{-\int_{0}^{T}r_{d}\left(t\right)dt}\mathcal{N}\left(\frac{1}{\sqrt{y}}\log\left(\frac{Ke^{-\int_{0}^{T}r_{d}\left(t\right)dt}}{e^{x}e^{-\int_{0}^{T}r_{q}\left(t\right)dt}}\right)+\frac{1}{2}\sqrt{y}\right)\nonumber \\
 &  & -e^{x}e^{-\int_{0}^{T}r_{q}\left(t\right)dt}\mathcal{N}\left(\frac{1}{\sqrt{y}}\log\left(\frac{Ke^{-\int_{0}^{T}r_{d}\left(t\right)dt}}{e^{x}e^{-\int_{0}^{T}r_{q}\left(t\right)dt}}\right)-\frac{1}{2}\sqrt{y}\right)\,,\label{eq:PBS}
\end{eqnarray}
and $\mathcal{E}$ is the error term of the second-order expansion.\end{thm}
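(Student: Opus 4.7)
The plan is to follow the volatility-of-volatility expansion framework of Benhamou--Gobet--Miri (2010) cited earlier, treating the vol-of-vol term $\lambda_t V_t dB_t$ as a perturbation. When $\lambda_t\equiv 0$ the volatility degenerates to the deterministic ODE solution $v_{0,t}$ given in the theorem, and the put price collapses exactly to the proxy $P_{BS}(x_0,\psi_T)$ with $\psi_T=\int_0^T v_{0,t}^2 dt$. Correction terms are obtained by iteratively re-injecting the stochastic fluctuations of $V$ around $v_0$ into the pricing representation.

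\textbf{Step 1: exact representation via the BS heat equation.} I would use the key identity $\partial_y P_{BS}=\tfrac{1}{2}(\partial_x^2-\partial_x)P_{BS}$ together with It\^o's lemma applied to $e^{-\int_0^t r_d(u)du}\,P_{BS}\!\bigl(X_t,\int_t^T v_{0,s}^2 ds\bigr)$ under the risk-neutral measure, where $X_t=\log S_t$. After the usual cancellation of drift and the standard Black--Scholes terms, the bracket $d\langle X\rangle_t=V_t^2 dt$ beats the deterministic proxy $v_{0,t}^2$ and taking expectations yields the exact identity
\begin{equation*}
P_{IGa}-P_{BS}(x_0,\psi_T)=\tfrac{1}{2}\,\mathbb{E}\!\int_0^T e^{-\int_0^t r_d du}\,(V_t^2-v_{0,t}^2)\,(\partial_x^2-\partial_x)P_{BS}\!\Bigl(X_t,\textstyle\int_t^T v_{0,s}^2 ds\Bigr)\,dt.
\end{equation*}

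\textbf{Step 2: solving the $V$-perturbation and iterating It\^o.} Writing $V_t=v_{0,t}+\widetilde V_t$, the remainder satisfies the linear SDE $d\widetilde V_t=-\kappa_t\widetilde V_t dt+\lambda_t V_t dB_t$, whose integrating-factor solution is $\widetilde V_t = e^{-\int_0^t\kappa_z dz}\int_0^t e^{\int_0^s\kappa_z dz}\lambda_s V_s dB_s$. This explains the exponential kernels $e^{\pm\int_0^{\cdot}\kappa_z dz}$ that decorate every coefficient in \eqref{eq:coefficients}. Decomposing $V_t^2-v_{0,t}^2 = 2v_{0,t}\widetilde V_t+\widetilde V_t^2$, I would apply the same It\^o/heat-equation trick a second time to both $(\partial_x^2-\partial_x)P_{BS}(X_t,\cdot)$ and the stochastic integral $\widetilde V_t$. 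Martingale parts drop upon taking expectations, while the joint quadratic variation $d\langle X,B\rangle_t=\rho_t V_t dt$ produces the $\rho_t\lambda_t v_{0,t}^2$-weighted terms (generating $a_{1,T}$ and, after a further iteration, $a_{2,T}$ and $b_{2,T}$), and the bracket $d\langle B\rangle_t=dt$ produces the pure $\lambda_t^2 v_{0,t}^2$-weighted terms with double integrating factor (generating $a_{0,T}$ and $b_{0,T}$). Each iteration deposits one more nested time integral over $[s,T]$, and freezing the Greeks at $(x_0,\psi_T)$ at the outermost step is what pulls $\partial_x^i\partial_y^j P_{BS}(x_0,\psi_T)$ out of the expectation, leaving exactly the closed-form integrals displayed in \eqref{eq:coefficients}. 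The identity $b_{2,T}=a_{1,T}^2/2$ arises naturally because this coefficient comes from squaring the first-order $\rho$-correction when iterating It\^o a second time and symmetrizing the inner double integral.

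\textbf{Main obstacle.} The formal derivation above is routine bookkeeping; the technical heart of the proof lies in controlling the remainder $\mathcal{E}$. One must bound terms of order $\lambda^3$ and higher that contain derivatives $\partial_x^i\partial_y^j P_{BS}(x_0,\psi_T)$ which blow up as $\psi_T\to 0$ because the put payoff is only Lipschitz. Following Benhamou--Gobet--Miri, this is done by Malliavin integration by parts, trading spatial derivatives for Skorohod divergences with integrands involving the weights $\lambda_t V_t$. A standing requirement is that $v_{0,t}$ be bounded away from zero on $[0,T]$ (controlled by mild assumptions on $\kappa,\theta$) and that $\lambda_t,\kappa_t$ be bounded; geometric positivity of $V_t$ is actually an advantage here, since $\log V_t$ is Gaussian-like so all moments of $V_t$ are finite, bypassing the Feller-type subtleties that complicated the CIR analysis. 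Care must also be taken to justify Fubini interchanges when rearranging the nested time integrals into the canonical form of $a_{i,T}$ and $b_{i,T}$.
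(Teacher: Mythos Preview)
Your overall plan is sound and would lead to the same formula, but it is \emph{not} the route the paper takes, so a brief comparison is in order.

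\textbf{What the paper actually does.} The proof does \emph{not} apply It\^o to $P_{BS}\bigl(X_t,\int_t^T v_{0,s}^2 ds\bigr)$. Instead it (i) conditions on $\mathcal{F}^B_T$ to obtain the mixing formula
\[
g(\varepsilon)=\mathbb{E}\Bigl[P_{BS}\Bigl(x_0+\textstyle\int_0^T\!\rho_tV_t^\varepsilon dB_t-\tfrac12\!\int_0^T(\rho_tV_t^\varepsilon)^2dt,\;\int_0^T(1-\rho_t^2)(V_t^\varepsilon)^2dt\Bigr)\Bigr],
\]
so that the only randomness left is through $B$; (ii) introduces a scalar $\varepsilon$ in front of $\lambda$, differentiates $V_t^\varepsilon$ in $\varepsilon$ to define the correction processes $V_{1,t},V_{2,t}$, and Taylor-expands $g(\varepsilon)$ at $\varepsilon=0$, evaluated at $\varepsilon=1$; (iii) uses Malliavin integration by parts repeatedly (the duality $\mathbb{E}[G\int\alpha\,dB]=\mathbb{E}[\int\alpha\,D^B_s G\,ds]$) to convert the resulting expectations of products of Greeks with $\int \alpha\,dB$ into deterministic nested integrals; (iv) exploits the crucial identity $\mathbb{E}\bigl[\partial^{i+j}_{x^iy^j}\tilde P_{BS}\bigr]=\partial^{i+j}_{x^iy^j}P_{BS}(x_0,\psi_T)$, which follows directly from the conditioning step. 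The relation $b_{2,T}=a_{1,T}^2/2$ falls out of a combinatorial identity on the nested $\omega$-integrals rather than from ``squaring the first-order correction'' per se.

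\textbf{What each approach buys.} The conditioning step in the paper reduces everything to functionals of a single Brownian motion $B$, so the Malliavin machinery is one-dimensional and the freezing of the Greeks at $(x_0,\psi_T)$ is automatic (point (iv) above). In your It\^o-parametrix route the Greeks are initially evaluated at the random point $\bigl(X_t,\int_t^T v_{0,s}^2ds\bigr)$ and you must justify the freezing separately by re-applying It\^o; moreover only $\partial_x$-derivatives appear natively, so you must repeatedly invert the heat identity to recover the mixed $\partial_x^i\partial_y^j$ form of the statement. Both routes are legitimate, but the paper's is shorter here. Finally, note that the paper does not prove a quantitative bound on $\mathcal{E}$; it simply identifies the second-order terms and defers the error analysis to the cited reference, so your discussion of Malliavin bounds on the remainder, while correct in spirit, goes beyond what is actually established in the appendix.
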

\begin{proof}
The proof or expansion \eqref{eq:IGa_put_expansion} is available
in Appendix \ref{sec:proof}. It is based on the proof of the Heston
expansion in \citet{Benhamou10}, that we extended to non-constant
$\kappa$, and adapted to the IGa model. \end{proof}
\begin{rem}
One can also easily obtain closed-form expansion for the Greeks by
adapting the proof of Theorem \ref{thm:IGa_expansion} to it. More
generally the methodology adopted in Theorem \ref{thm:IGa_expansion}
can be adapted to any options that have closed-form Black-Scholes
prices with time dependent parameters, for example barrier options
(\citet{Lo03}, \citet{Rapisarda03}).
\end{rem}

\subsection{Recursions for expansion coefficients\label{sub:coefficients_recursions}}

The above coefficients \eqref{eq:coefficients} are expressed for
general deterministic parameters $\left(\kappa_{t},\theta_{t},\lambda_{t},\rho_{t}\right)_{0\leq t\leq T}$.
In this subsection, we propose an explicit recursive algorithm to
compute the coefficients when the parameters are piecewise constant.

Let $T_{0}=0<T_{1}<T_{2}<\cdots<T_{N}=T$ be a partition of $\left[0,T\right]$.
We now suppose that the parameters are constant on each interval:
\begin{eqnarray}
\left(\kappa_{t},\theta_{t},\lambda_{t},\rho_{t}\right) & :=\left(\kappa_{i},\theta_{i},\lambda_{i},\rho_{i}\right) & \forall t\in\left[T_{i},T_{i+1}\right[\label{eq:Piecewise_Constant}
\end{eqnarray}

\subsubsection{Integral operator}

Firstly, we define recursively the following integral operator \footnote{which corresponds to Def. 5.1 in \citet{Benhamou10} extended to non-constant
$\kappa$}
\begin{eqnarray}
\omega_{t,T}^{\left(\kappa,l\right)}=\int_{t}^{T}e^{\int_{0}^{u}\kappa_{z}dz}l_{u}du &  & \forall t\in\left[0,T\right]\label{eq:omega_0}
\end{eqnarray}

\vspace{-4mm}

\begin{eqnarray}
\omega_{t,T}^{\left(\kappa_{n},l_{n}\right),\ldots,\left(\kappa_{1},l_{1}\right)}=\omega_{t,T}^{\left(\kappa_{n},l_{n}\omega_{.,T}^{\left(\kappa_{n-1},l_{n-1}\right),\ldots,\left(\kappa_{1},l_{1}\right)}\right)} &  & \forall t\in\left[0,T\right]\label{eq:omega_n}
\end{eqnarray}
Using this notation, the coefficients from the price expansion \ref{eq:IGa_put_expansion}
can be expressed as follows:
\begin{eqnarray}
\psi_{T} & = & \omega_{0,T}^{\left(0,v_{0,.}^{2}\right)}\nonumber \\
a_{0,T} & = & \omega_{0,T}^{\left(2\kappa,\lambda^{2}v_{0,.}^{2}\right),\left(-2\kappa,1\right)}\nonumber \\
a_{1,T} & = & 2\omega_{0,T}^{\left(\kappa,\rho\lambda v_{0,.}^{2}\right),\left(-\kappa,v_{0,.}\right)}\nonumber \\
a_{2,T} & = & 2\omega_{0,T}^{\left(\kappa,\rho\lambda v_{0,.}^{2}\right),\left(0,2\rho\lambda v_{0,.}\right),\left(-\kappa,v_{0,.}\right)}+2\omega_{0,T}^{\left(\kappa,\rho\lambda v_{0,.}^{2}\right),\left(\kappa,\rho\lambda v_{0,.}^{2}\right),\left(-2\kappa,1\right)}\nonumber \\
b_{0,T} & = & 4\omega_{0,T}^{\left(2\kappa,\lambda^{2}v_{0,.}^{2}\right),\left(-\kappa,v_{0,.}\right),\left(-\kappa,v_{0,.}\right)}\label{eq:Integral_Coefficients}
\end{eqnarray}

\subsubsection{Recursions}

Let $l_{1},l_{2},l_{3},\ldots$ be deterministic functions. When these
functions are piecewise constant,

\[
l_{k}(t):=l_{k,i}\,,\,t\in\left[T_{i},T_{i+1}\right[\,,\,k=1,2,\ldots\,,
\]

then the following integral operators at time $T_{i+1}$ can be expressed
as a function of the integral operators at time $T_{i}$:{\large{}
\begin{eqnarray}
\omega_{0,T_{i+1}}^{\left(n_{1}\kappa,l_{1}v_{0,.}^{p_{1}}\right)} & = & \omega_{0,T_{i}}^{\left(n_{1}\kappa,l_{1}v_{0,.}^{p_{1}}\right)}+e_{0,T_{i}}^{n_{1}}l_{1,i}\varphi_{T_{i},T_{i+1}}^{\left(n_{1},0,p_{1}\right)}\label{eq:Integral_Recursion_1}\\
\nonumber \\
\omega_{0,T_{i+1}}^{\left(n_{2}\kappa,l_{2}v_{0,.}^{p_{2}}\right),\left(n_{1}\kappa,l_{1}v_{0,.}^{p_{1}}\right)} & = & \omega_{0,T_{i}}^{\left(n_{2}\kappa,l_{2}v_{0,.}^{p_{2}}\right),\left(n_{1}\kappa,l_{1}v_{0,.}^{p_{1}}\right)}\nonumber \\
 & + & \omega_{0,T_{i}}^{\left(n_{2}\kappa,l_{2}v_{0,.}^{p_{2}}\right)}e_{0,T_{i}}^{n_{1}}l_{1,i}\varphi_{T_{i},T_{i+1}}^{\left(n_{1},0,p_{1}\right)}\nonumber \\
 & + & e_{0,T_{i}}^{n_{2}+n_{1}}l_{2,i}l_{1,i}\varphi_{T_{i},T_{i+1}}^{\left(n_{2},0,p_{2}\right),\left(n_{1},0,p_{1}\right)}\label{eq:Integral_Recursion_2}
\end{eqnarray}
}\newpage{}

{\large{}
\begin{eqnarray}
\omega_{0,T_{i+1}}^{\left(n_{3}\kappa,l_{3}v_{0,.}^{p_{3}}\right),\left(n_{2}\kappa,l_{2}v_{0,.}^{p_{2}}\right),\left(n_{1}\kappa,l_{1}v_{0,.}^{p_{1}}\right)} & = & \omega_{0,T_{i}}^{\left(n_{3}\kappa,l_{3}v_{0,.}^{p_{3}}\right),\left(n_{2}\kappa,l_{2}v_{0,.}^{p_{2}}\right),\left(n_{1}\kappa,l_{1}v_{0,.}^{p_{1}}\right)}\nonumber \\
 & + & \omega_{0,T_{i}}^{\left(n_{3}\kappa,l_{3}v_{0,.}^{p_{3}}\right),\left(n_{2}\kappa,l_{2}v_{0,.}^{p_{2}}\right)}e_{0,T_{i}}^{n_{1}}l_{1,i}\varphi_{T_{i},T_{i+1}}^{\left(n_{1},0,p_{1}\right)}\nonumber \\
 & + & \omega_{0,T_{i}}^{\left(n_{3}\kappa,l_{3}v_{0,.}^{p_{3}}\right)}e_{0,T_{i}}^{n_{2}+n_{1}}l_{2,i}l_{1,i}\varphi_{T_{i},T_{i+1}}^{\left(n_{2},0,p_{2}\right),\left(n_{1},0,p_{1}\right)}\nonumber \\
 & + & e_{0,T_{i}}^{n_{3}+n_{2}+n_{1}}l_{3,i}l_{2,i}l_{1,i}\varphi_{T_{i},T_{i+1}}^{\left(n_{3},0,p_{3}\right),\left(n_{2},0,p_{2}\right),\left(n_{1},0,p_{1}\right)}\label{eq:Integral_Recursion_3}
\end{eqnarray}
}{\large \par}

and so on, where $n_{1},n_{2},n_{3},\ldots$ and $p_{1},p_{2},p_{3},\ldots$
are integers, $e_{0,t}=e^{\int_{0}^{t}\kappa_{z}dz}$, and for all
$T_{i}\leq t\leq T_{i+1}$, $\varphi$ is defined as follows:

{\large{}
\begin{eqnarray*}
\varphi_{t,T_{i+1}}^{\left(n_{1},m_{1},p_{1}\right)} & = & \int_{t}^{T_{i+1}}e^{n_{1}\int_{T_{i}}^{s}\kappa_{z}dz}\gamma\left(s\right)^{m_{1}}v_{0,s}^{p_{1}}ds=\int_{t}^{T_{i+1}}e^{n_{1}\kappa_{i}\Delta T_{i}\gamma\left(s\right)}\gamma\left(s\right)^{m_{1}}v_{0,s}^{p_{1}}ds\\
\varphi_{t,T_{i+1}}^{\left(n_{k},m_{k},p_{k}\right),\cdots,\left(n_{1},m_{1},p_{1}\right)} & = & \int_{t}^{T_{i+1}}e^{n_{k}\int_{T_{i}}^{s}\kappa_{z}dz}\gamma\left(s\right)^{m_{k}}v_{0,s}^{p_{k}}\varphi_{t,T_{i+1}}^{\left(n_{k-1},m_{k-1},p_{k-1}\right),\cdots,\left(n_{1},m_{1},p_{1}\right)}ds\\
 & = & \int_{t}^{T_{i+1}}e^{n_{k}\kappa_{i}\Delta T_{i}\gamma\left(s\right)}\gamma\left(s\right)^{m_{k}}v_{0,s}^{p_{k}}\varphi_{t,T_{i+1}}^{\left(n_{k-1},m_{k-1},p_{k-1}\right),\cdots,\left(n_{1},m_{1},p_{1}\right)}ds
\end{eqnarray*}
}{\large \par}

where $\gamma\left(s\right)=\frac{s-T_{i}}{\Delta T_{i}}$ with $\Delta T_{i}=T_{i+1}-T_{i}$,
and $n_{1},m_{1},p_{1},\ldots,n_{k},m_{k},p_{k}$ are integers.

It should be noted that when the parameters are piecewise constant
(equation \eqref{eq:Piecewise_Constant}), $\varphi$ can be computed
explicitly by recursion. Define $\Delta\kappa_{i}=\kappa_{i+1}-\kappa_{i}$,
$\Delta\theta_{i}=\theta_{i+1}-\theta_{i}$, $\Delta\lambda_{i}=\lambda_{i+1}-\lambda_{i}$
and $\Delta\rho_{i}=\rho_{i+1}-\rho_{i}$, and let $t\in\left[T_{i},T_{i+1}\right]$.
Then

{\large{}
\[
v_{0,t}=\theta_{i}+\left(v_{0,T_{i}}-\theta_{i}\right)e^{-\kappa_{i}\Delta T_{i}\gamma\left(t\right)}\,,
\]
}{\large \par}

and, using the definition of $\varphi$, basic integration and integration
by parts, the following recursions hold

{\large{}
\[
\varphi_{t,T_{i+1}}^{\left(n_{1},m_{1},p_{1}\right)}=\begin{cases}
\theta_{i}\varphi_{t,T_{i+1}}^{\left(n_{1},m_{1},p_{1}-1\right)}+\left(v_{0,T_{i}}-\theta_{i}\right)\varphi_{t,T_{i+1}}^{\left(n_{1}-1,m_{1},p_{1}-1\right)} & p_{1}>0\\
\\
\frac{\Delta T_{i}}{m_{1}+1}\left(1-\gamma\left(t\right)^{m_{1}+1}\right) & n_{1}=0,\,p_{1}=0\\
\\
\frac{e^{n_{1}\kappa_{i}\Delta T_{i}}-e^{n_{1}\kappa_{i}\Delta T_{i}\gamma\left(t\right)}}{n_{1}\kappa_{i}} & n_{1}\neq0,\,m_{1}=0,\,p_{1}=0\\
\\
\frac{e^{n_{1}\kappa_{i}\Delta T_{i}}-\gamma\left(t\right)^{m_{1}}e^{n_{1}\kappa_{i}\Delta T_{i}\gamma\left(t\right)}}{n_{1}\kappa_{i}}-\frac{m_{1}}{n_{1}\kappa_{i}\Delta T_{i}}\varphi_{t,T_{i+1}}^{\left(n_{1},m_{1}-1,0\right)} & n_{1}\neq0,\,m_{1}>0,\,p_{1}=0
\end{cases}
\]
}{\large \par}

and for every integer $k>1$:

{\large{}
\begin{eqnarray*}
 &  & \varphi_{t,T_{i+1}}^{\left(n_{k},m_{k},p_{k}\right),\left(n_{k-1},m_{k-1},p_{k-1}\right),\cdots,\left(n_{1},m_{1},p_{1}\right)}=\\
\\
 &  & \begin{cases}
\theta_{i}\varphi_{t,T_{i+1}}^{\left(n_{k},m_{k},p_{k}-1\right),\cdots}+\left(v_{0,T_{i}}-\theta_{i}\right)\varphi_{t,T_{i+1}}^{\left(n_{k}-1,m_{k},p_{k}-1\right)} & p_{k}>0\\
\\
-\frac{\Delta T_{i}}{m_{k}+1}\gamma\left(t\right)^{m_{k}+1}\varphi_{t,T_{i+1}}^{\left(n_{k-1},m_{k-1},p_{k-1}\right),\cdots}+\frac{\Delta T_{i}}{m_{k}+1}\varphi_{t,T_{i+1}}^{\left(n_{k-1},m_{k}+m_{k-1}+1,p_{k-1}\right),\cdots} & n_{k}=0,\,p_{k}=0\\
\\
-\frac{e^{n_{k}\kappa_{i}\Delta T_{i}\gamma\left(t\right)}}{n_{k}\kappa_{i}}\varphi_{t,T_{i+1}}^{\left(n_{k-1},m_{k-1},p_{k-1}\right),\cdots}+\frac{1}{n_{k}\kappa_{i}}\varphi_{t,T_{i+1}}^{\left(n_{k}+n_{k-1},m_{k-1},p_{k-1}\right),\cdots} & n_{k}\neq0,\,m_{k}=0,\,p_{k}=0\\
\\
-\frac{e^{n_{k}\kappa_{i}\Delta T_{i}\gamma\left(t\right)}}{n_{k}\kappa_{i}}\left[\sum_{j=0}^{m_{k}}\gamma\left(t\right)^{j}\frac{m_{k}!}{j!}\left(\frac{-1}{n_{k}\kappa_{i}\Delta T_{i}}\right)^{m_{k}-j}\right]\varphi_{t,T_{i+1}}^{\left(n_{k-1},m_{k-1},p_{k-1}\right),\cdots}\\
+\frac{1}{n_{k}\kappa_{i}}\sum_{j=0}^{m_{k}}\frac{m_{k}!}{j!}\left(\frac{-1}{n_{k}\kappa_{i}\Delta T_{i}}\right)^{m_{k}-j}\varphi_{t,T_{i+1}}^{\left(n_{k}+n_{k-1},m_{k-1}+j,p_{k-1}\right),\cdots} & n_{k}\neq0,\,m_{k}>0,\,p_{k}=0
\end{cases}
\end{eqnarray*}
}{\large \par}

All these equation are sufficient to compute $\varphi_{T_{i},T_{i+1}}^{\left(n_{k},m_{k},p_{k}\right),\cdots,\left(n_{1},m_{1},p_{1}\right)}$
for any integers $\left(n_{1},m_{1},p_{1}\right),\cdots,\left(n_{k},m_{k},p_{k}\right)$,
making it possible to implement the integral recursions \eqref{eq:Integral_Recursion_1},
\eqref{eq:Integral_Recursion_2} and \eqref{eq:Integral_Recursion_3},
which, in turn, make it possible to compute explicitly the expansion
coefficients \eqref{eq:Integral_Coefficients} when the parameters
of the stochastic volatility model are piecewise constant.

\subsection{Monte Carlo\label{sub:Monte_Carlo}}

An obvious alternative method for computing the price of a European
put option under the IGa stochastic volatility model is the Monte
Carlo method. Using the expression

\[
P_{IGa}\left(x_{0},v_{0}\right)=\mathbb{E}\left[P_{BS}\left(x_{0}+\int_{0}^{T}\rho_{t}V_{t}dB_{t}-\frac{1}{2}\int_{0}^{T}\left(\rho_{t}V_{t}\right)^{2}dt,\int_{0}^{T}\left(1-\rho_{t}^{2}\right)V_{t}^{2}dt\right)\right]\,,
\]
(cf. equation \eqref{eq:g_PBS}), only the simulation of the volatility
is needed. Then, one can take advantage of the strong solution of
the Inverse Gamma diffusion (\citet{Zhao09})

\[
V_{t}=\frac{1}{Z_{t}}\left(V_{0}+\int_{0}^{t}\kappa_{s}\theta_{s}Z_{s}ds\right)
\]
where $Z$ is a geometric Brownian motion
\[
dZ_{t}=\left(\kappa_{t}+\lambda_{t}^{2}\right)Z_{t}dt-\lambda_{t}Z_{t}dB_{t}
\]
ie.
\[
Z_{t}=\exp\left(\int_{0}^{t}\left(\kappa_{s}+\frac{1}{2}\lambda_{s}^{2}\right)ds-\int_{0}^{t}\lambda_{s}dB_{s}\right)\,,
\]
to derive the following unconditionally stable discretization scheme
(called ``Pathwise Adapted Linearization'' in \citet{Kahl06})
\begin{eqnarray*}
\delta_{n} & \leftarrow & \left(\kappa_{t_{n}}+\frac{1}{2}\lambda_{t_{n}}^{2}\right)\Delta t_{n}-\lambda_{t_{n}}\Delta B_{n}\\
V_{t_{n+1}} & \leftarrow & V_{t_{n}}e^{-\delta_{n}}+\kappa_{t_{n}}\theta_{t_{n}}\frac{1-e^{-\delta_{n}}}{\delta_{n}}\Delta t_{n}
\end{eqnarray*}
where $0=t_{0}\leq\ldots\leq t_{n}\leq\ldots\leq t_{N}=T$ is a time
discretization of the interval $\left[0,T\right]$, with $\Delta t_{n}:=t_{n+1}-t_{n}$
and $\Delta B_{n}:=B_{t_{n+1}}-B_{t_{n}}$. This scheme ensures in
particular that the paths of $V$ remain positive.

\section{Numerical experiments\label{sec:Numerical_experiments}}

This section provides numerical tests of the fast pricing method proposed
in this paper. The Inverse Gamma stochastic volatility model will
be calibrated to market data using the pricing formula \eqref{eq:IGa_put_expansion}.
Three foreign exchange data sets will be used, detailed in subsection
\ref{sub:Datasets}. For each test case, we will provide the calibration
errors and expansion errors for implied volatility surfaces (Subsection
\ref{sub:Calibration}).

\subsection{Datasets\label{sub:Datasets}}

Three full sets of foreign exchange market data are provided here
for easy benchmarking. We provide the strikes used for the implied
volatility surface, and the equivalent constant rates for each maturity
(the constant rate $r_{eq}(T)$ equivalent to a time-dependent rate
$r(t)$, $0\leq t\leq T$ is defined by $r_{eq}(T):=\frac{1}{T}\int_{0}^{T}r(t)dt$).
The corresponding market implied volatility surfaces will be provided
in subsection \ref{sub:Calibration} (Tables \ref{tab:AUDUSD_implied_volatility},
\ref{tab:USDJPY_implied_volatility} and \ref{tab:USDSGD_implied_volatility}),
along with the corresponding calibration errors and expansion errors.
All the numbers quoted here have been rounded.

\subsubsection{Set 1: AUD/USD 17 June 2014 ($S_{0}=0.9335$)\vspace{-1.8mm}
}

\begin{table}[H]
\noindent \begin{centering}
\begin{tabular}{llllllllll}
\multicolumn{6}{l}{Strikes} & \hspace{10mm} & \multicolumn{3}{l}{Rates}\tabularnewline
\cline{1-6} \cline{8-10} 
Mat & 10 Put & 25 Put & ATM & 25 Call & 10 Call &  & Mat & foreign & domestic\tabularnewline
\cline{1-6} \cline{8-10} 
\noalign{\vskip1mm}
1M & 0.9103 & 0.9233 & 0.9356 & 0.9469 & 0.9572 &  & 1M & 2.80\% & 0.21\%\tabularnewline
\noalign{\vskip1mm}
3M & 0.8906 & 0.9168 & 0.9401 & 0.9605 & 0.9795 &  & 3M & 2.89\% & 0.31\%\tabularnewline
\noalign{\vskip1mm}
6M & 0.8664 & 0.9100 & 0.9469 & 0.9780 & 1.0078 &  & 6M & 3.03\% & 0.45\%\tabularnewline
\noalign{\vskip1mm}
1Y & 0.8322 & 0.9027 & 0.9609 & 1.0096 & 1.0580 &  & 1Y & 3.26\% & 0.69\%\tabularnewline[1mm]
\cline{1-6} \cline{8-10} 
\end{tabular}
\par\end{centering}

\protect\caption{Market data, AUDUSD, 17 June 2014}
\end{table}

\subsubsection{Set 2: USD/JPY 11 June 2014 ($S_{0}=102.00$)\vspace{-1.8mm}
}

\begin{table}[H]
\noindent \begin{centering}
\begin{tabular}{llllllllll}
\multicolumn{6}{l}{Strikes} & \hspace{10mm} & \multicolumn{3}{l}{Rates}\tabularnewline
\cline{1-6} \cline{8-10} 
Mat & 10 Put & 25 Put & ATM & 25 Call & 10 Call &  & Mat & foreign & domestic\tabularnewline
\cline{1-6} \cline{8-10} 
\noalign{\vskip1mm}
1M & 99.78 & 100.88 & 101.99 & 103.09 & 104.16 &  & 1M & 0.20\% & -0.04\%\tabularnewline
\noalign{\vskip1mm}
3M & 97.47 & \enskip{}99.77 & 101.98 & 104.15 & 106.31 &  & 3M & 0.29\% & \hspace{0.333em}0.07\%\tabularnewline
\noalign{\vskip1mm}
6M & 94.75 & \enskip{}98.47 & 102.00 & 105.46 & 109.06 &  & 6M & 0.40\% & \hspace{0.333em}0.16\%\tabularnewline
\noalign{\vskip1mm}
1Y & 90.04 & \enskip{}96.34 & 102.01 & 107.67 & 114.06 &  & 1Y & 0.52\% & \hspace{0.333em}0.21\%\tabularnewline[1mm]
\cline{1-6} \cline{8-10} 
\end{tabular}
\par\end{centering}

\protect\caption{Market data, USDJPY, 11 June 2014}
\end{table}

\subsubsection{Set 3: USD/SGD 04 September 2014 ($S_{0}=1.2541$)\vspace{-1.8mm}
}

\begin{table}[H]
\noindent \begin{centering}
\begin{tabular}{llllllllll}
\multicolumn{6}{l}{Strikes} & \hspace{10mm} & \multicolumn{3}{l}{Rates}\tabularnewline
\cline{1-6} \cline{8-10} 
Mat & 10 Put & 25 Put & ATM & 25 Call & 10 Call &  & Mat & foreign & domestic\tabularnewline
\cline{1-6} \cline{8-10} 
\noalign{\vskip1mm}
1M & 1.2397 & 1.2466 & 1.2542 & 1.2637 & 1.2755 &  & 1M & 0.16\% & 0.17\%\tabularnewline
\noalign{\vskip1mm}
2M & 1.2334 & 1.2432 & 1.2542 & 1.2688 & 1.2871 &  & 2M & 0.19\% & 0.19\%\tabularnewline
\noalign{\vskip1mm}
3M & 1.2286 & 1.2406 & 1.2543 & 1.2729 & 1.2970 &  & 3M & 0.28\% & 0.27\%\tabularnewline
\noalign{\vskip1mm}
6M & 1.2152 & 1.2339 & 1.2545 & 1.2836 & 1.3233 &  & 6M & 0.48\% & 0.47\%\tabularnewline
\noalign{\vskip1mm}
1Y & 1.1945 & 1.2232 & 1.2548 & 1.3018 & 1.3704 &  & 1Y & 0.61\% & 0.57\%\tabularnewline[1mm]
\cline{1-6} \cline{8-10} 
\end{tabular}
\par\end{centering}

\protect\caption{Market data, USDSGD, 04 September 2014}
\end{table}

Our empirical analyses suggest that these three data sets are representative
of the behavior on market data of the model and of the expansion.
In particular, our empirical findings will be similar for each data
set, leading us to present only three examples, as providing more
examples would not provide much additional information.

\subsection{Calibration\label{sub:Calibration}}

Using the closed-form expansion \eqref{eq:IGa_put_expansion}, we
can respectively calibrate the Inverse Gamma model to the three market
data sets. The four (piecewise-constant) stochastic parameters $\kappa$,
$\theta$, $\lambda$ and $\rho$ are calibrated, along with the initial
volatility $V_{0}$. This calibration process can be assessed from
the implied volatility calibration error. Then, using these parameters,
we will estimate the expansion error, by comparing the implied volatility
provided by the expansion \eqref{eq:IGa_put_expansion} to a Monte
Carlo price (Subsection \ref{sub:Monte_Carlo}), computed with $24$
time steps per day and $M=1\,000\,000$ paths (to keep both bias and
variance very low).

\subsubsection{Set 1: AUD/USD 17 June 2014 ($S_{0}=0.9335$)}

The estimated initial volatility is $V_{0}=6.49\%$, the estimated
piecewise-constant stochastic parameters are given by

\noindent \begin{center}
\begin{tabular}{lrrrr}
\noalign{\vskip1mm}
 & \multicolumn{1}{l}{$\kappa$} & \multicolumn{1}{l}{\enskip{}$\theta$} & \multicolumn{1}{l}{$\lambda$} & \multicolumn{1}{l}{\ $\rho$}\tabularnewline
\hline 
\noalign{\vskip1mm}
1M & 4.19 & 6.39\% & 1.71 & -0.40\tabularnewline
\noalign{\vskip1mm}
3M & 2.33 & 11.01\% & 1.12 & -0.74\tabularnewline
\noalign{\vskip1mm}
6M & 2.26 & 11.85\% & 1.25 & -0.73\tabularnewline
\noalign{\vskip1mm}
1Y & 1.80 & 12.52\% & 0.87 & -0.92\tabularnewline[1mm]
\hline 
\end{tabular},
\par\end{center}

and the calibration and expansion errors are given below (rounded
to the nearest basis point):

\begin{table}[H]
\begin{tabular}{lrrrrr}
 & \multicolumn{1}{l}{\enskip{}10 Put} & \multicolumn{1}{l}{25 Put} & \multicolumn{1}{l}{ATM} & \multicolumn{1}{l}{25 Call} & \multicolumn{1}{l}{10 Call}\tabularnewline
\hline 
\noalign{\vskip1mm}
1M & 7.48 {\small \textbf{{\color{gr1}[-0.04]}{\color{gr1}[-0.04]}}} & 6.87 {\small \textbf{{\color{gr1}[-0.02]}{\color{gr1}[-0.01]}}} & 6.38 {\small \textbf{{\color{gr2}[ 0.06]}{\color{gr1}[ 0.00]}}} & 6.19 {\small \textbf{{\color{gr1}[ 0.00]}{\color{gr1}[ 0.02]}}} & 6.19 {\small \textbf{{\color{gr1}[-0.03]}{\color{gr1}[-0.00]}}}\tabularnewline
\noalign{\vskip1mm}
3M & 8.46 {\small \textbf{{\color{gr2}[ 0.05]}{\color{gr3}[-0.12]}}} & 7.48 {\small \textbf{{\color{gr2}[-0.05]}{\color{gr1}[-0.03]}}} & 6.68 {\small \textbf{{\color{gr1}[ 0.03]}{\color{gr1}[ 0.03]}}} & 6.36 {\small \textbf{{\color{gr1}[-0.04]}{\color{gr2}[ 0.08]}}} & 6.39 {\small \textbf{{\color{gr2}[ 0.05]}{\color{gr2}[-0.08]}}}\tabularnewline
\noalign{\vskip1mm}
6M & 9.93 {\small \textbf{{\color{gr1}[ 0.02]}{\color{gr4}[-0.15]}}} & 8.43 {\small \textbf{{\color{gr1}[-0.03]}{\color{gr1}[-0.03]}}} & 7.30 {\small \textbf{{\color{gr2}[ 0.06]}{\color{gr1}[ 0.04]}}} & 6.82 {\small \textbf{{\color{gr2}[-0.07]}{\color{gr3}[ 0.13]}}} & 6.90 {\small \textbf{{\color{gr2}[ 0.08]}{\color{gr5}[-0.29]}}}\tabularnewline
\noalign{\vskip1mm}
1Y & 11.51 {\small \textbf{{\color{gr2}[-0.05]}{\color{gr4}[-0.19]}}} & 9.53 {\small \textbf{{\color{gr1}[ 0.00]}{\color{gr1}[-0.04]}}} & 8.05 {\small \textbf{{\color{gr4}[ 0.16]}{\color{gr2}[ 0.07]}}} & 7.47 {\small \textbf{{\color{gr3}[-0.14]}{\color{gr4}[ 0.19]}}} & 7.57 {\small \textbf{{\color{gr4}[ 0.15]}{\color{gr6}[-0.63]}}}\tabularnewline[1mm]
\hline 
\end{tabular}

\protect\caption{AUDUSD Market implied volatility {[}calibration error{]} {[}expansion
error{]} in \%\label{tab:AUDUSD_implied_volatility}}
\end{table}

The median absolute deviation of the calibration error is $5.0$bp,
and its mean absolute deviation is $5.7$bp.

The median absolute deviation of the expansion error is $5.5$bp,
and its mean absolute deviation is $10.9$bp.

Overall, the median absolute deviation of the total error is $6.0$bp,
and its mean absolute deviation is $10.5$bp.

\subsubsection{Set 2: USD/JPY 11 June 2014 ($S_{0}=102.00$)}

The estimated initial volatility is $V_{0}=4.42\%$, the estimated
piecewise-constant stochastic parameters are given by

\noindent \begin{center}
\begin{tabular}{lrrrr}
\noalign{\vskip1mm}
 & \multicolumn{1}{l}{$\kappa$} & \multicolumn{1}{l}{\enskip{}$\theta$} & \multicolumn{1}{l}{$\lambda$} & \multicolumn{1}{l}{\ $\rho$}\tabularnewline
\hline 
\noalign{\vskip1mm}
1M & 8.23 & 7.96\% & 2.47 & -0.10\tabularnewline
\noalign{\vskip1mm}
3M & 5.00 & 6.47\% & 1.32 & -0.19\tabularnewline
\noalign{\vskip1mm}
6M & 3.62 & 9.32\% & 1.61 & -0.15\tabularnewline
\noalign{\vskip1mm}
1Y & 2.10 & 6.74\% & 1.88 & -0.22\tabularnewline[1mm]
\hline 
\end{tabular},
\par\end{center}

and the calibration and expansion errors are given below (rounded
to the nearest basis point):

\begin{table}[H]
\begin{tabular}{lrrrrr}
 & \multicolumn{1}{l}{10 Put} & \multicolumn{1}{l}{25 Put} & \multicolumn{1}{l}{ATM} & \multicolumn{1}{l}{25 Call} & \multicolumn{1}{l}{10 Call}\tabularnewline
\hline 
\noalign{\vskip1mm}
1M & 6.08 {\small \textbf{{\color{gr2}[ 0.08]}{\color{gr1}[-0.00]}}} & 5.76 {\small \textbf{{\color{gr1}[ 0.00]}{\color{gr1}[-0.01]}}} & 5.53 {\small \textbf{{\color{gr1}[ 0.01]}{\color{gr1}[ 0.00]}}} & 5.63 {\small \textbf{{\color{gr2}[-0.07]}{\color{gr1}[ 0.00]}}} & 5.81 {\small \textbf{{\color{gr1}[-0.02]}{\color{gr1}[-0.00]}}}\tabularnewline
\noalign{\vskip1mm}
3M & 7.08 {\small \textbf{{\color{gr1}[ 0.02]}{\color{gr1}[-0.02]}}} & 6.53 {\small \textbf{{\color{gr1}[-0.03]}{\color{gr1}[-0.03]}}} & 6.15 {\small \textbf{{\color{gr1}[ 0.01]}{\color{gr1}[ 0.00]}}} & 6.20 {\small \textbf{{\color{gr1}[-0.03]}{\color{gr1}[ 0.01]}}} & 6.44 {\small \textbf{{\color{gr1}[ 0.04]}{\color{gr1}[-0.01]}}}\tabularnewline
\noalign{\vskip1mm}
6M & 8.19 {\small \textbf{{\color{gr3}[-0.12]}{\color{gr2}[ 0.06]}}} & 7.42 {\small \textbf{{\color{gr1}[ 0.02]}{\color{gr1}[-0.03]}}} & 6.95 {\small \textbf{{\color{gr3}[ 0.10]}{\color{gr1}[-0.01]}}} & 7.00 {\small \textbf{{\color{gr1}[ 0.02]}{\color{gr1}[ 0.00]}}} & 7.36 {\small \textbf{{\color{gr2}[-0.05]}{\color{gr1}[ 0.02]}}}\tabularnewline
\noalign{\vskip1mm}
1Y & 9.90 {\small \textbf{{\color{gr4}[-0.15]}{\color{gr5}[ 0.32]}}} & 8.61 {\small \textbf{{\color{gr3}[ 0.13]}{\color{gr3}[-0.12]}}} & 7.95 {\small \textbf{{\color{gr2}[ 0.07]}{\color{gr2}[-0.07]}}} & 8.04 {\small \textbf{{\color{gr1}[-0.04]}{\color{gr1}[-0.02]}}} & 8.69 {\small \textbf{{\color{gr2}[-0.06]}{\color{gr2}[ 0.08]}}}\tabularnewline[1mm]
\hline 
\end{tabular}

\protect\caption{USDJPY Market implied volatility {[}calibration error{]} {[}expansion
error{]} in \%\label{tab:USDJPY_implied_volatility}}
\end{table}

The median absolute deviation of the calibration error is $4.0$bp,
and its mean absolute deviation is $5.4$bp.

The median absolute deviation of the expansion error is $1.5$bp,
and its mean absolute deviation is $4.1$bp.

Overall, the median absolute deviation of the total error is $2.0$bp,
and its mean absolute deviation is $3.9$bp.

\subsubsection{Set 3: USD/SGD 04 September 2014 ($S_{0}=1.2541$)}

The estimated initial volatility is $V_{0}=3.16\%$, the estimated
piecewise-constant stochastic parameters are given by

\noindent \begin{center}
\begin{tabular}{lrrrr}
\noalign{\vskip1mm}
 & \multicolumn{1}{l}{$\kappa$} & \multicolumn{1}{l}{\enskip{}$\theta$} & \multicolumn{1}{l}{$\lambda$} & \multicolumn{1}{l}{\ $\rho$}\tabularnewline
\hline 
\noalign{\vskip1mm}
1M & 2.90 & 4.03\% & 2.30 & 0.49\tabularnewline
\noalign{\vskip1mm}
2M & 2.88 & 4.19\% & 1.64 & 0.49\tabularnewline
\noalign{\vskip1mm}
3M & 2.85 & 4.44\% & 2.37 & 0.58\tabularnewline
\noalign{\vskip1mm}
6M & 2.76 & 4.08\% & 1.68 & 0.51\tabularnewline
\noalign{\vskip1mm}
1Y & 2.81 & 4.27\% & 2.31 & 0.67\tabularnewline[1mm]
\hline 
\end{tabular},
\par\end{center}

and the calibration and expansion errors are given below (rounded
to the nearest basis point):

\begin{table}[H]
\begin{tabular}{lrrrrr}
 & \multicolumn{1}{l}{10 Put} & \multicolumn{1}{l}{25 Put} & \multicolumn{1}{l}{ATM} & \multicolumn{1}{l}{25 Call} & \multicolumn{1}{l}{10 Call}\tabularnewline
\hline 
\noalign{\vskip1mm}
1M & 3.06 {\small \textbf{{\color{gr2}[-0.07]}{\color{gr2}[ 0.05]}}} & 3.02 {\small \textbf{{\color{gr2}[ 0.06]}{\color{gr1}[ 0.02]}}} & 3.30 {\small \textbf{{\color{gr1}[ 0.04]}{\color{gr1}[ 0.01]}}} & 3.79 {\small \textbf{{\color{gr1}[ 0.01]}{\color{gr1}[-0.03]}}} & 4.43 {\small \textbf{{\color{gr2}[-0.05]}{\color{gr1}[-0.02]}}}\tabularnewline
\noalign{\vskip1mm}
2M & 3.15 {\small \textbf{{\color{gr1}[-0.02]}{\color{gr2}[ 0.07]}}} & 3.15 {\small \textbf{{\color{gr1}[ 0.02]}{\color{gr2}[ 0.05]}}} & 3.45 {\small \textbf{{\color{gr1}[ 0.01]}{\color{gr1}[-0.00]}}} & 4.10 {\small \textbf{{\color{gr1}[-0.02]}{\color{gr2}[-0.07]}}} & 4.84 {\small \textbf{{\color{gr1}[ 0.00]}{\color{gr1}[-0.03]}}}\tabularnewline
\noalign{\vskip1mm}
3M & 3.23 {\small \textbf{{\color{gr1}[-0.00]}{\color{gr2}[ 0.07]}}} & 3.24 {\small \textbf{{\color{gr1}[ 0.00]}{\color{gr2}[ 0.06]}}} & 3.57 {\small \textbf{{\color{gr1}[ 0.02]}{\color{gr1}[-0.00]}}} & 4.35 {\small \textbf{{\color{gr1}[-0.02]}{\color{gr2}[-0.09]}}} & 5.20 {\small \textbf{{\color{gr1}[-0.01]}{\color{gr1}[ 0.02]}}}\tabularnewline
\noalign{\vskip1mm}
6M & 3.52 {\small \textbf{{\color{gr1}[-0.03]}{\color{gr2}[ 0.08]}}} & 3.47 {\small \textbf{{\color{gr1}[-0.02]}{\color{gr2}[ 0.09]}}} & 3.80 {\small \textbf{{\color{gr2}[ 0.05]}{\color{gr1}[ 0.02]}}} & 4.78 {\small \textbf{{\color{gr1}[ 0.02]}{\color{gr3}[-0.11]}}} & 5.85 {\small \textbf{{\color{gr2}[-0.07]}{\color{gr4}[ 0.17]}}}\tabularnewline
\noalign{\vskip1mm}
1Y & 3.84 {\small \textbf{{\color{gr2}[-0.05]}{\color{gr1}[ 0.04]}}} & 3.78 {\small \textbf{{\color{gr1}[-0.04]}{\color{gr3}[ 0.11]}}} & 4.20 {\small \textbf{{\color{gr3}[ 0.12]}{\color{gr1}[ 0.03]}}} & 5.40 {\small \textbf{{\color{gr4}[ 0.15]}{\color{gr2}[-0.07]}}} & 6.78 {\small \textbf{{\color{gr4}[-0.19]}{\color{gr6}[ 0.68]}}}\tabularnewline[1mm]
\hline 
\end{tabular}

\protect\caption{USDSGD Market implied volatility {[}calibration error{]} {[}expansion
error{]} in \%\label{tab:USDSGD_implied_volatility}}
\end{table}

The median absolute deviation of the calibration error is $2.0$bp,
and its mean absolute deviation is $4.4$bp.

The median absolute deviation of the expansion error is $5.0$bp,
and its mean absolute deviation is $8.0$bp.

Overall, the median absolute deviation of the total error is $7.0$bp,
and its mean absolute deviation is $7.8$bp.

\subsection{Comments}

As shown by Tables \ref{tab:AUDUSD_implied_volatility}, \ref{tab:USDJPY_implied_volatility}
and \ref{tab:USDSGD_implied_volatility}, both calibration errors
and expansion errors are small overall (only a few basis points of
absolute deviation).

The overall calibrations are very good. The worst cases occur for
1Y maturity and 25 Call strike on Set 1 ($+19$bp), for 1Y maturity
and 10 Put strike on Set 2 ($-15$bp), and for 1Y maturity and 10
Call strike on Set 3 ($-19$bp). In other words, they occur for long
maturities and strikes far out of the money.

As to the expansion error, it is as expected very small for short
maturities and close to the money (ATM), but can get larger for long
maturities and far from the money. The worst cases occur for 1Y maturity
and 10 Call strike on Set 1 ($-63$bp), for 1Y maturity and 10 Put
strike on Set 2 ($+32$bp), and again for 1Y maturity and 10 Call
strike on Set 3 ($+68$bp). Though these worst case errors can look
large in implied volatility terms, the absolute option price values
are very small far out of the money, meaning that the actual error
is not that large in absolute price terms. Overall, we observe that
the expansion error increases with maturity $T$ and volatility of
volatility $\lambda$ (as expected from \citet{Benhamou10}), and
with absolute correlation $\left|\rho\right|$ (compare Set 2 to the
two other sets).

To decrease the calibration errors, one could try more general stochastic
volatility models, as discussed in Subsection \ref{sub:litterature},
though adding too many parameters may generate overfitting and damage
the stability and robustness of the model.

To decrease the expansion error, the most obvious solution is to compute
a higher-order expansion (expansion \eqref{eq:IGa_put_expansion}
is only a second-order expansion). Another idea is to take advantage
of the fact that the expansion error is much smaller in practice when
the stochastic parameters are constant (only a few bps everywhere).
\citet{Benhamou10} tried to exploit this idea by first calibrating
a model with constant parameters for each maturity, and then turning
these calibrations into an ``equivalent'' model with piecewise constant
parameters. However, the equivalent piecewise constant model is determined
using the same second-order expansion, therefore this two-step calibration
procedure does not improve the expansion error of the final model
with piecewise-constant parameters.

Depending on the intended application, the accuracy of the present
second-order expansion \eqref{eq:IGa_put_expansion}, as observed
in Subsection \ref{sub:Calibration}, may be sufficient. For example,
if one only needs the calibrated stochastic parameters to feed into
a more general local-stochastic Inverse Gamma volatility model, then
the local volatility component can easily eliminate the residual price
discrepancies (cf. \citet{Sepp14}).

\subsection{Comparison to Heston\label{sub:Heston_comparison}}

Finally, we compare the Inverse Gamma calibration to a classical Heston
stochastic volatility calibration. Subsection \ref{sub:Other-models}
provided theoretical reasons to favor the Inverse Gamma model over
the Heston model, so it is interesting to check if, for example, one
of the two models provides significantly better calibrations in practice,
as this is one of the important criteria for practical use by the
industry. We use the same three datasets, and calibrate the Heston
model using the semi-closed-form of \citet{Heston93}.

\subsubsection{Set 1: AUD/USD 17 June 2014 ($S_{0}=0.9335$)}

For the Heston model, the estimated initial variance is $V_{0}=0.41\%$,
the estimated piecewise constant stochastic parameters are given by

\noindent \begin{center}
\begin{tabular}{cccccc}
 & $\kappa$ & \enskip{}$\theta$ & $\lambda$ & \ $\rho$ & $2\kappa\theta/\lambda^{2}$\tabularnewline
\hline 
1M & 1.16 & 1.28\% & 0.32 & -0.32 & 0.30\tabularnewline
3M & 0.97 & 2.32\% & 0.48 & -0.49 & 0.20\tabularnewline
6M & 1.01 & 1.88\% & 0.51 & -0.54 & 0.15\tabularnewline
1Y & 1.02 & 1.85\% & 0.51 & -0.52 & 0.14\tabularnewline
\hline 
\end{tabular},
\par\end{center}

and the calibration error is given below (rounded to the nearest basis
point):

\noindent \begin{center}
\begin{table}[h]
\noindent \begin{centering}
\begin{tabular}{lrrrrr}
 & \multicolumn{1}{l}{\enskip{}10 Put} & \multicolumn{1}{l}{25 Put} & \multicolumn{1}{l}{ATM} & \multicolumn{1}{l}{25 Call} & \multicolumn{1}{l}{10 Call}\tabularnewline
\hline 
\noalign{\vskip1mm}
1M & 7.48 {\small \textbf{{\color{gr1}[ 0.02]}}} & 6.87 {\small \textbf{{\color{gr1}[-0.03]}}} & 6.38 {\small \textbf{{\color{gr1}[ 0.00]}}} & 6.19 {\small \textbf{{\color{gr1}[-0.04]}}} & 6.19 {\small \textbf{{\color{gr2}[-0.05]}}}\tabularnewline
\noalign{\vskip1mm}
3M & 8.46 {\small \textbf{{\color{gr1}[ 0.03]}}} & 7.48 {\small \textbf{{\color{gr2}[-0.07]}}} & 6.68 {\small \textbf{{\color{gr1}[ 0.01]}}} & 6.36 {\small \textbf{{\color{gr1}[ 0.03]}}} & 6.39 {\small \textbf{{\color{gr2}[ 0.07]}}}\tabularnewline
\noalign{\vskip1mm}
6M & 9.93 {\small \textbf{{\color{gr1}[ 0.03]}}} & 8.43 {\small \textbf{{\color{gr2}[-0.05]}}} & 7.30 {\small \textbf{{\color{gr1}[ 0.03]}}} & 6.82 {\small \textbf{{\color{gr1}[ 0.02]}}} & 6.90 {\small \textbf{{\color{gr2}[-0.06]}}}\tabularnewline
\noalign{\vskip1mm}
1Y & 11.51 {\small \textbf{{\color{gr1}[-0.04]}}} & 9.53 {\small \textbf{{\color{gr3}[-0.10]}}} & 8.05 {\small \textbf{{\color{gr3}[ 0.10]}}} & 7.47 {\small \textbf{{\color{gr2}[ 0.08]}}} & 7.57 {\small \textbf{{\color{gr2}[-0.07]}}}\tabularnewline[1mm]
\hline 
\end{tabular}
\par\end{centering}

\protect\caption{AUDUSD Market implied volatility {[}Heston calibration error{]} in
\%, \label{tab:AUDUSD_implied_volatility_H}}
\end{table}

\par\end{center}

The median absolute deviation of the calibration error is $4.3$bp,
and its mean absolute deviation is $4.6$bp.

\subsubsection{Set 2: USD/JPY 11 June 2014 ($S_{0}=102.00$)}

The estimated initial variance is $V_{0}=0.28\%$, the estimated piecewise
constant stochastic parameters are given by

\noindent \begin{center}
\begin{tabular}{cccccc}
 & $\kappa$ & \enskip{}$\theta$ & $\lambda$ & \ $\rho$ & $2\kappa\theta/\lambda^{2}$\tabularnewline
\hline 
1M & 1.17 & 1.39\% & 0.23 & -0.11 & 0.62\tabularnewline
3M & 1.10 & 1.76\% & 0.40 & -0.21 & 0.25\tabularnewline
6M & 1.09 & 1.73\% & 0.44 & -0.22 & 0.20\tabularnewline
1Y & 1.04 & 1.92\% & 0.48 & -0.43 & 0.17\tabularnewline
\hline 
\end{tabular},
\par\end{center}

and the calibration error is given below (rounded to the nearest basis
point):

\noindent \begin{center}
\begin{table}[h]
\noindent \begin{centering}
\begin{tabular}{lrrrrr}
 & \multicolumn{1}{l}{10 Put} & \multicolumn{1}{l}{25 Put} & \multicolumn{1}{l}{ATM} & \multicolumn{1}{l}{25 Call} & \multicolumn{1}{l}{10 Call}\tabularnewline
\hline 
\noalign{\vskip1mm}
1M & 6.08 {\small \textbf{{\color{gr2}[-0.06]}}} & 5.76 {\small \textbf{{\color{gr2}[-0.06]}}} & 5.53 {\small \textbf{{\color{gr1}[ 0.00]}}} & 5.63 {\small \textbf{{\color{gr3}[-0.10]}}} & 5.81 {\small \textbf{{\color{gr3}[-0.13]}}}\tabularnewline
\noalign{\vskip1mm}
3M & 7.08 {\small \textbf{{\color{gr1}[-0.00]}}} & 6.53 {\small \textbf{{\color{gr2}[-0.06]}}} & 6.15 {\small \textbf{{\color{gr1}[ 0.00]}}} & 6.20 {\small \textbf{{\color{gr1}[-0.02]}}} & 6.44 {\small \textbf{{\color{gr1}[ 0.04]}}}\tabularnewline
\noalign{\vskip1mm}
6M & 8.19 {\small \textbf{{\color{gr3}[ 0.13]}}} & 7.42 {\small \textbf{{\color{gr1}[-0.04]}}} & 6.95 {\small \textbf{{\color{gr2}[-0.05]}}} & 7.00 {\small \textbf{{\color{gr2}[-0.05]}}} & 7.36 {\small \textbf{{\color{gr2}[ 0.06]}}}\tabularnewline
\noalign{\vskip1mm}
1Y & 9.90 {\small \textbf{{\color{gr1}[ 0.04]}}} & 8.61 {\small \textbf{{\color{gr1}[-0.03]}}} & 7.95 {\small \textbf{{\color{gr1}[-0.02]}}} & 8.04 {\small \textbf{{\color{gr1}[-0.01]}}} & 8.69 {\small \textbf{{\color{gr1}[ 0.02]}}}\tabularnewline[1mm]
\hline 
\end{tabular}
\par\end{centering}

\protect\caption{USDJPY Market implied volatility {[}Heston calibration error{]} in
\%\label{tab:USDJPY_implied_volatility_H}}
\end{table}

\par\end{center}

The median absolute deviation of the calibration error is $4.0$bp,
and its mean absolute deviation is $4.6$bp.

\subsubsection{Set 3: USD/SGD 04 September 2014 ($S_{0}=1.2541$)}

The estimated initial variance is $V_{0}=0.11\%$, the estimated piecewise
constant stochastic parameters are given by

\noindent \begin{center}
\begin{tabular}{cccccc}
 & $\kappa$ & \enskip{}$\theta$ & $\lambda$ & \ $\rho$ & $2\kappa\theta/\lambda^{2}$\tabularnewline
\hline 
1M & 1.25 & 0.49\% & 0.28 & 0.41 & 0.16\tabularnewline
2M & 1.15 & 0.62\% & 0.49 & 0.29 & 0.06\tabularnewline
3M & 1.24 & 0.54\% & 0.48 & 0.23 & 0.06\tabularnewline
6M & 1.51 & 0.32\% & 0.52 & 0.44 & 0.04\tabularnewline
1Y & 1.21 & 0.86\% & 0.49 & -0.26 & 0.09\tabularnewline
\hline 
\end{tabular},
\par\end{center}

and the calibration error is given below (rounded to the nearest basis
point):

\noindent \begin{center}
\begin{table}[H]
\noindent \begin{centering}
\begin{tabular}{lrrrrr}
 & \multicolumn{1}{l}{10 Put} & \multicolumn{1}{l}{25 Put} & \multicolumn{1}{l}{ATM} & \multicolumn{1}{l}{25 Call} & \multicolumn{1}{l}{10 Call}\tabularnewline
\hline 
\noalign{\vskip1mm}
1M & 3.06 {\small \textbf{{\color{gr2}[-0.07]}}} & 3.02 {\small \textbf{{\color{gr1}[ 0.02]}}} & 3.30 {\small \textbf{{\color{gr1}[ 0.00]}}} & 3.79 {\small \textbf{{\color{gr1}[-0.02]}}} & 4.43 {\small \textbf{{\color{gr1}[-0.02]}}}\tabularnewline
\noalign{\vskip1mm}
2M & 3.15 {\small \textbf{{\color{gr2}[-0.05]}}} & 3.15 {\small \textbf{{\color{gr1}[-0.01]}}} & 3.45 {\small \textbf{{\color{gr1}[-0.01]}}} & 4.10 {\small \textbf{{\color{gr2}[-0.05]}}} & 4.84 {\small \textbf{{\color{gr1}[ 0.01]}}}\tabularnewline
\noalign{\vskip1mm}
3M & 3.23 {\small \textbf{{\color{gr1}[ 0.02]}}} & 3.24 {\small \textbf{{\color{gr1}[-0.00]}}} & 3.57 {\small \textbf{{\color{gr1}[-0.01]}}} & 4.35 {\small \textbf{{\color{gr3}[-0.10]}}} & 5.20 {\small \textbf{{\color{gr1}[-0.00]}}}\tabularnewline
\noalign{\vskip1mm}
6M & 3.52 {\small \textbf{{\color{gr1}[ 0.01]}}} & 3.47 {\small \textbf{{\color{gr1}[-0.03]}}} & 3.80 {\small \textbf{{\color{gr1}[ 0.03]}}} & 4.78 {\small \textbf{{\color{gr1}[-0.04]}}} & 5.85 {\small \textbf{{\color{gr3}[ 0.13]}}}\tabularnewline
\noalign{\vskip1mm}
1Y & 3.84 {\small \textbf{{\color{gr2}[ 0.09]}}} & 3.78 {\small \textbf{{\color{gr1}[-0.04]}}} & 4.20 {\small \textbf{{\color{gr1}[-0.02]}}} & 5.40 {\small \textbf{{\color{gr2}[-0.08]}}} & 6.78 {\small \textbf{{\color{gr3}[ 0.13]}}}\tabularnewline[1mm]
\hline 
\end{tabular}
\par\end{centering}

\protect\caption{USDSGD Market implied volatility {[}Heston calibration error{]} in
\%\label{tab:USDSGD_implied_volatility_H}}
\end{table}

\par\end{center}

The median absolute deviation of the calibration error is $2.5$bp,
and its mean absolute deviation is $4.0$bp.

\subsubsection{Result analysis and comparison}

For the three examples, the calibrated parameters do not satisfy the
Feller condition: the Feller ratio is much smaller than $1$. This
is the reason why we could not rely on the Heston closed-form expansion
of \citet{Benhamou10} for calibration. Indeed, when the Feller ratio
is so low, the expansion error can be massive. For the USD/JPY parameters,
the average absolute expansion error is around $80$bp. For the AUD/USD
and USD/SGD parameters, the error is so large that some expansion
prices can even become negative. When the implied volatility exists,
the average absolute expansion error is around $200$bp for both AUD/USD
and USD/SGD parameters. These large errors were expected, as the expansion
approach for the Heston model is not expected to work when the Feller
condition is not satisfied (the error analysis in \citet{Benhamou10}
requires the Feller condition to hold). This means that the expansion
scheme for the Heston model cannot be used in practice because, as
recalled in subsection \ref{sub:Other-models}, the Feller condition
is virtually never satisfied on real-world market data. For this reason,
we used the slower semi-closed-form pricing formula of Heston for
calibrating the model.

The table below summarizes the calibration error (median and mean,
in basis points) for both Inverse Gamma and Heston stochastic volatility
models.

\noindent \begin{center}
\begin{tabular}{lllllll}
 & \multicolumn{2}{l}{AUD/USD} & \multicolumn{2}{l}{USD/JPY} & \multicolumn{2}{l}{USD/SGD}\tabularnewline
\cline{2-7} 
calibration error (bp) & median & mean & median & mean & median & mean\tabularnewline
\hline 
Inverse Gamma & 5.0 & 5.7 & 4.0 & 5.4 & 2.0 & 4.4\tabularnewline
Heston & 4.3 & 4.6 & 4.0 & 4.6 & 2.5 & 4.0\tabularnewline
\hline 
\end{tabular}
\par\end{center}

One can see that the calibration error is of the same order. If anything,
the Heston calibration is slightly better ($-0.8$bp on average).
Therefore, the calibration error itself does not provide clear and
sufficient indication on which stochastic volatility model is a better
model in practice. To assess the quality of a stochastic volatility
model, the stability of the estimated parameters (the ``variance''
in statistical terms) is as relevant as the calibration error (the
``bias''' in statistical terms). Indeed, a very flexible stochastic
volatility model with many parameters is likely to have a very small
calibration error on a given market implied volatility curve. However,
the parameters of such a model are also likely to be very unstable
over time, if recalibrated every day as is common in practice. The
stability of parameters is an important criterion for dynamic hedging
to work in practice and for the option price to be meaningful at all.
This ``bias-variance tradeoff'' discussion suggests that more numerical
tests, in particular stability tests, are needed to further compare
the Inverse Gamma and the Heston stochastic volatility models. In
particular, a dynamic hedging backtest using the two models for several
options over a long time period would be useful. We plan to carry
out such tests in the future.

At this stage of the comparison, we recommend the Inverse Gamma model
over the Heston model for foreign exchange option pricing for at least
three reasons:
\begin{itemize}
\item Firstly, even though, based on the three examples studied in this
paper, the Heston calibration seems slightly more accurate, the fact
that the optimal parameters are very far from satisfying the Feller
condition is a major issue, as the very unrealistic volatility distribution
resulting from this breach (see Figure \ref{fig:vol_std_0.24}) is
likely to affect the quality of dynamic hedging and the pricing of
more exotic options.
\item Secondly, from our experience of using the model for more than forty
currency pairs, the closed-form expansion for Inverse Gamma vanilla
options (Theorem \ref{thm:IGa_expansion}) works quite well on real-world
market data (unlike the Heston expansion). If needed, the accuracy
of the expansion, already good, can always be enhanced using a higher-order
expansion (e.g. 4th order).
\item Lastly, the closed-form expansion approach is the fastest method for
calibration purposes, and therefore is highly desirable in practice
when calibration needs to be performed for hundreds of currency pairs
or securities on at least a daily basis. 
\end{itemize}

\section{Conclusion\label{sec:Conclusion}}

This paper has introduced the Inverse Gamma stochastic volatility
model, as defined by the volatility dynamics $dV_{t}=\kappa_{t}\left(\theta_{t}-V_{t}\right)dt+\lambda_{t}V_{t}dB_{t}$. 

The volatility distribution in this model is more consistent with
market dynamics than alternative one factor affine stochastic volatility
models such as the Heston model.

We have proposed a closed-form volatility of volatility expansion
for the price of a European put option under this stochastic volatility
model, and simple and straightforward recursion formulae to instantaneously
compute the coefficients of the expansion when the four time-dependent
stochastic parameters $\kappa$, $\theta$, $\lambda$ and $\rho$
are piecewise constant. 

We have demonstrated the viability of the second-order expansion scheme
on three test cases from foreign exchange (AUD/USD, USD/JPY and USD/SGD).
Both calibration error and expansion error are small overall (only
a few basis points of absolute deviation on average). 

Potential improvements can be made and have also been discussed in
the paper, such as introducing additional stochastic parameters, computing
higher order terms in the expansion, performing more comprehensive
calibration backtests, and studying the local-stochastic volatility
version of this model. We hope to foster further academic research
on these non-affine models favoured by industry practitioners, as
this paper demonstrates that they not only provide a more accurate
representation of market dynamics, but also still be tractable thanks
to expansion methods.

\section*{Acknowledgement}

The authors would like to thank Mr Julian Cook of GFI FENICS for many
insightful and valuable discussions, and for providing the market
data of this research work.

\bibliographystyle{abbrvnat}
\bibliography{Biblio}

\appendix

\section{Stationary distribution of volatility\label{sec:stationary_vol}}

\subsection{Heston}

For the Heston stochastic volatility model with constant coefficients,
\begin{eqnarray}
dS_{t} & = & (r_{d}-r_{f})S_{t}dt+\sqrt{V_{t}}S_{t}dW_{t}\label{eq:Heston}\\
dV_{t} & = & \kappa\left(\theta-V_{t}\right)dt+\lambda\sqrt{V_{t}}dB_{t}\nonumber \\
d\left\langle W,B\right\rangle _{t} & = & \rho dt\,,
\end{eqnarray}

the long-run distribution of the variance $V_{t}$ is a Gamma distribution
with parameters $k_{\Gamma}=\beta\theta$ and $\theta_{\Gamma}=\frac{1}{\beta}$,
where $\beta:=\frac{2\kappa}{\lambda^{2}}$. Remark that the probability
density function of $V_{t}$ can reach $0$ if $\beta\theta<1$ (Feller
condition). 

Consequently, the long-run Heston volatility $Y_{t}=\sqrt{V_{t}}$
has a generalized Chi distribution
\[
p_{\chi}\left(x;a,b,\nu\right)=\frac{1}{2^{\frac{\nu}{2}-1}b\Gamma\left(\frac{\nu}{2}\right)}\left(\frac{x-a}{b}\right)^{\nu-1}e^{-\frac{1}{2}\left(\frac{x-a}{b}\right)^{2}}
\]
with parameters $a=0$, $b=\frac{1}{\sqrt{2\beta}}$ and $\nu=2\beta\theta$. 

In particular, the moments of the long-run volatility $Y_{t}=\sqrt{V_{t}}$
are given by:
\begin{eqnarray}
\mathbb{E}\left[Y_{t}\right] & \rightarrow & \frac{\Gamma\left(k_{\Gamma}+\frac{1}{2}\right)}{\Gamma\left(k_{\Gamma}\right)}\sqrt{\theta_{\Gamma}}=\frac{\Gamma\left(\beta\theta+\frac{1}{2}\right)}{\Gamma\left(\beta\theta\right)\sqrt{\beta\theta}}\sqrt{\theta}\nonumber \\
\mathbb{E}\left[Y_{t}^{2}\right] & \rightarrow & k_{\Gamma}\theta_{\Gamma}=\theta\label{eq:moments_Heston_vol}
\end{eqnarray}

\subsection{Inverse Gamma \label{sub:stationary_IGa}}

For the Inverse Gamma stochastic volatility model with constant coefficients,
\begin{eqnarray*}
dS_{t} & = & (r_{d}-r_{f})S_{t}dt+V_{t}S_{t}dW_{t}\\
dV_{t} & = & \kappa\left(\theta-V_{t}\right)dt+\lambda V_{t}dB_{t}\\
d\left\langle W,B\right\rangle _{t} & = & \rho dt\,,
\end{eqnarray*}
the long-run distribution of the volatility process $V_{t}$ is an
inverse Gamma distribution
\[
p_{\Gamma^{-1}}\left(x;\alpha,\beta\right)=\frac{\beta^{\alpha}}{\Gamma\left(\alpha\right)}x^{-\alpha-1}e^{-\frac{\beta}{x}}
\]
with parameters $\alpha_{\Gamma^{-1}}=1+\beta$ and $\beta_{\Gamma^{-1}}=\beta\theta$,
where $\beta:=\frac{2\kappa}{\lambda^{2}}$ (see \citet{Barone05}).
Therefore
\begin{eqnarray}
\mathbb{E}\left[V_{t}\right] & \rightarrow & \frac{\beta_{\Gamma^{-1}}}{\alpha_{\Gamma^{-1}}-1}=\theta\,\,\,(iff\,\beta>0)\nonumber \\
\mathrm{Var}\left[V_{t}\right] & \rightarrow & \frac{\beta_{\Gamma^{-1}}^{2}}{\left(\alpha_{\Gamma^{-1}}-1\right)^{2}\left(\alpha_{\Gamma^{-1}}-2\right)}=\frac{\theta^{2}}{\beta-1}\,\,\,(iff\,\beta>1)\label{eq:moments_IGa_vol}
\end{eqnarray}

\section{The Log-Normal terminology\label{sec:log_normal_really}}

There exists three different volatility dynamics that have been called
log-normal in the literature:
\begin{eqnarray}
dV_{t} & = & \kappa V_{t}dt+\lambda V_{t}dB_{t}\label{eq:LN-GBM}\\
dV_{t} & = & \kappa\left(\theta-V_{t}\right)dt+\lambda V_{t}dB_{t}\label{eq:LN-IGa}\\
d\log(V_{t}) & = & \kappa\left(\theta-\log(V_{t})\right)dt+\lambda dB_{t}\label{eq:LN-ExpOU}
\end{eqnarray}
In the first formulation \eqref{eq:LN-GBM}, the volatility is modeled
by a geometric Brownian motion, which is log-normally distributed.
It is a special case of SABR model (with $\beta=1$). It is not mean-reverting.

In the last formulation \eqref{eq:LN-ExpOU}, the logarithm of the
volatility is an Ornstein-Uhlenbeck process, which has a normally-distributed
stationary distribution. Therefore the long-term distribution of the
volatility $V_{t}$ is indeed log-normally distributed.

The intermediate formulation \eqref{eq:LN-IGa} combines characteristics
from a geometric Brownian motion (the volatility of volatility $\lambda V_{t}$
is proportional to the volatility $V_{t}$) and from an exponential
Ornstein-Uhlenbeck (the mean-reversion effect towards a level $\theta$).
However the stationary distribution of the volatility is not a log-normal
distribution but an inverse Gamma distribution (\citet{Barone05,Zhao09,Sepp14,Sepp15}).

Practitioners sometimes refer to \eqref{eq:LN-IGa} as a ``Log-Normal''
or ``mean-reverting Log-Normal'' stochastic volatility model, but
as this terminology can be ambiguous and misleading, we choose to
call it Inverse Gamma stochastic volatility model, which is consistent
with the stationary distribution of the volatility \eqref{eq:LN-IGa}.

\section{Proof of main expansion\label{sec:proof}}

In the IGa model \eqref{eq:IGa}, factoring out the drift rates $r_{d}$
and $r_{f}$, the dynamics of the log-spot $X_{t}$ reads
\begin{eqnarray}
dX_{t} & = & -\frac{V_{t}^{2}}{2}dt+V_{t}dW_{t},\quad X_{0}=x_{0}\nonumber \\
dV_{t} & = & \kappa_{t}(\theta_{t}-V_{t})dt+\lambda_{t}V_{t}dB_{t},\quad V_{0}=v_{0}\label{eq:logX_V}\\
d\left\langle W,B\right\rangle _{t} & = & \rho_{t}dt\,,\nonumber 
\end{eqnarray}
where $x_{0}=\log\left(S_{0}\right)$. Define a perturbed process
$\left(X^{\varepsilon},V^{\varepsilon}\right)$ as follows
\begin{eqnarray}
dX_{t}^{\varepsilon} & = & -\frac{(V_{t}^{\varepsilon})^{2}}{2}dt+V_{t}^{\varepsilon}dW_{t},\quad X_{0}^{\varepsilon}=x_{0}\nonumber \\
dV_{t}^{\varepsilon} & = & \kappa_{t}(\theta_{t}-V_{t}^{\varepsilon})dt+\varepsilon\lambda_{t}V_{t}^{\varepsilon}dB_{t},\quad V_{0}^{\varepsilon}=v_{0}\label{eq:logX_Veps}\\
d\left\langle W,B\right\rangle _{t} & = & \rho_{t}dt\,,\nonumber 
\end{eqnarray}
and define
\begin{equation}
g(\varepsilon)=\exp\left(-\int_{0}^{T}r_{d}(t)dt\right)\mathbb{E}\left[\left(K-\exp\left(-\int_{0}^{T}\left(r_{d}(t)-r_{f}(t)\right)dt\right)+X_{T}^{\varepsilon}\right)_{+}\right]\,,\label{eq:g_Exp}
\end{equation}
so that $g(1)=P_{IGa}$, the price of European put with IGa stochastic
volatility, which is the quantity we want to compute. Remark that
$g(0)$ reduces to a Black-Scholes price.

The expression \eqref{eq:g_Exp} for $g(\varepsilon)$ can be simplified.
Remark that $W_{t}$ can be decomposed into
\[
W_{t}=\rho_{t}B_{t}+\sqrt{1-\rho_{t}^{2}}dB_{t}^{\perp}
\]
where $B^{\bot}$ is a Brownian motion independent from $B$. Therefore
\begin{eqnarray*}
X_{t}^{\varepsilon} & = & x_{0}+\int_{0}^{T}\rho_{t}V_{t}^{\varepsilon}dB_{t}-\frac{1}{2}\int_{0}^{T}(V_{t}^{\varepsilon})^{2}dt+\int_{0}^{T}\sqrt{1-\rho_{t}^{2}}V_{t}^{\varepsilon}dB_{t}^{\perp}\,.
\end{eqnarray*}
Let $\mathcal{F}^{B}=\left(\mathcal{F}_{t}^{B}\right)_{0\leq t\leq T}$
be the filtration generated by $B$. One can see that $X_{T}^{\varepsilon}\left|\mathcal{F}_{T}^{B}\right.$
has a Gaussian distribution with mean $x_{0}+\int_{0}^{T}\rho_{t}V_{t}^{\varepsilon}dB_{t}-\frac{1}{2}\int_{0}^{T}(V_{t}^{\varepsilon})^{2}dt=\left\{ x_{0}+\int_{0}^{T}\rho_{t}V_{t}^{\varepsilon}dB_{t}-\frac{1}{2}\int_{0}^{T}(\rho_{t}V_{t}^{\varepsilon})^{2}dt\right\} -\frac{1}{2}\int_{0}^{T}(1-\rho_{t}^{2})(V_{t}^{\varepsilon})^{2}dt$
and variance $\int_{0}^{T}(1-\rho_{t}^{2})(V_{t}^{\varepsilon})^{2}dt$.
Consequently,
\begin{eqnarray}
g(\varepsilon) & = & \exp\left(-\int_{0}^{T}r_{d}(t)dt\right)\mathbb{E}\left[\mathbb{E}\left[\left(K-\exp\left(-\int_{0}^{T}\left(r_{d}(t)-r_{f}(t)\right)dt\right)+X_{T}^{\varepsilon}\right)_{+}\left|\mathcal{F}_{T}^{B}\right.\right]\right]\nonumber \\
 & = & \mathbb{E}\left[P_{BS}\left(x_{0}+\int_{0}^{T}\rho_{t}V_{t}^{\varepsilon}dB_{t}-\frac{1}{2}\int_{0}^{T}(\rho_{t}V_{t}^{\varepsilon})^{2}dt\,,\,\int_{0}^{T}(1-\rho_{t}^{2})(V_{t}^{\varepsilon})^{2}dt\right)\right]\label{eq:g_PBS}
\end{eqnarray}

where $P_{BS}(x,y)$ is the Black-Scholes price of a European put
option with spot $e^{x}$ and integrated variance $y$ (equation \eqref{eq:PBS}).

For any non-negative integer $n$, define $V_{n,t}^{\varepsilon}:=\frac{\partial^{n}V_{t}^{\varepsilon}}{\partial\varepsilon^{n}}$
and $V_{n,t}:=\left.\frac{\partial^{n}V_{t}^{\varepsilon}}{\partial\varepsilon^{n}}\right|_{\varepsilon=0}$.
One can check that
\begin{eqnarray*}
dV_{0,t}^{\varepsilon} & = & \kappa_{t}(\theta_{t}-V_{0,t}^{\varepsilon})dt+\varepsilon\lambda_{t}V_{0,t}^{\varepsilon}dB_{t},\quad V_{0,0}^{\varepsilon}=v_{0}\\
dV_{n,t}^{\varepsilon} & = & -\kappa_{t}V_{n,t}^{\varepsilon}dt+n\lambda_{t}V_{n-1,t}^{\varepsilon}dB_{t}+\varepsilon\lambda_{t}V_{n,t}^{\varepsilon}dB_{t},\quad V_{n,0}^{\varepsilon}=0,\quad n\geq1
\end{eqnarray*}
and
\begin{eqnarray}
v_{0,t} & = & e^{-\int_{0}^{t}\kappa_{z}dz}\left(v_{0}+\int_{0}^{t}\kappa_{s}\theta_{s}e^{\int_{0}^{s}\kappa_{z}dz}ds\right)\label{eq:v0}\\
V_{n,t} & = & e^{-\int_{0}^{t}\kappa_{z}dz}\int_{0}^{t}e^{\int_{0}^{s}\kappa_{z}dz}n\lambda_{s}V_{n-1,s}dB_{s},\quad n\geq1\label{eq:Vn}
\end{eqnarray}
where $v_{0,t}:=V_{0,t}$ is deterministic.

At this point, the main idea of the proof is to approximate $g(1)$
using a Taylor expansion of $g(\varepsilon)$ around $\varepsilon=0$,
as $g$ and its derivatives reduce to a Black-Scholes formula when
$\varepsilon=0$. 

Thus, we approximate $V_{t}=V_{t}^{1}$ ($\varepsilon=1$) using $V_{t}^{0}=v_{0,t}$
($\varepsilon=0$) by applying the Taylor formula to the function
$\varepsilon\mapsto V_{t}^{\varepsilon}$
\[
V_{t}=v_{0,t}+V_{1,t}+\frac{1}{2}V_{2,t}+\ldots
\]
Doing the same to the function $\varepsilon\mapsto(V_{t}^{\varepsilon})^{2}$
yields
\[
(V_{t})^{2}=(v_{0,t})^{2}+2v_{0,t}V_{1,t}+(v_{0,t}V_{2,t}+(V_{1,t})^{2})+\ldots
\]
To simplify notations, define, for $i\geq0$ and $j\geq0$,
\begin{equation}
\frac{\partial^{i+j}\tilde{P}_{BS}}{\partial x^{i}y^{j}}:=\frac{\partial^{i+j}P_{BS}}{\partial x^{i}y^{j}}\left(x_{0}+\int_{0}^{T}\rho_{t}v_{0,t}dB_{t}-\frac{1}{2}\int_{0}^{T}\rho_{t}^{2}v_{0,t}^{2}dt\,,\,\int_{0}^{T}(1-\rho_{t}^{2})v_{0,t}^{2}dt\right)\,.\label{eq:PtildeBS}
\end{equation}
Then, the second-order Taylor expansion of $g(\varepsilon)$ around
$\varepsilon=0$, valued at $\varepsilon=1$, reads (keeping only second-order terms)
\begin{eqnarray}
g(1) & = & \mathbb{E}\left[\tilde{P}_{BS}\right]\label{eq:g(1)}\\
(C_{x}:=) & + & \mathbb{E}\left[\frac{\partial\tilde{P}_{BS}}{\partial x}\left\{ \int_{0}^{T}\rho_{t}\left(V_{1,t}+\frac{1}{2}V_{2,t}\right)dB_{t}-\frac{1}{2}\int_{0}^{T}\rho_{t}^{2}\left(2v_{0,t}V_{1,t}+V_{1,t}^{2}+v_{0,t}V_{2,t}\right)dt\right\} \right]\nonumber \\
(C_{y}:=) & + & \mathbb{E}\left[\frac{\partial\tilde{P}_{BS}}{\partial y}\left\{ \int_{0}^{T}\left(1-\rho_{t}^{2}\right)\left(2v_{0,t}V_{1,t}+V_{1,t}^{2}+v_{0,t}V_{2,t}\right)dt\right\} \right]\nonumber \\
(C_{xx}:=) & + & \frac{1}{2}\mathbb{E}\left[\frac{\partial^{2}\tilde{P}_{BS}}{\partial x^{2}}\left\{ \int_{0}^{T}\rho_{t}\left(V_{1,t}\right)dB_{t}-\frac{1}{2}\int_{0}^{T}\rho_{t}^{2}\left(2v_{0,t}V_{1,t}\right)dt\right\} ^{2}\right]\nonumber \\
(C_{yy}:=) & + & \frac{1}{2}\mathbb{E}\left[\frac{\partial^{2}\tilde{P}_{BS}}{\partial y^{2}}\left\{ \int_{0}^{T}\left(1-\rho_{t}^{2}\right)\left(2v_{0,t}V_{1,t}\right)dt\right\} ^{2}\right]\nonumber \\
(C_{xy}:=) & + & \mathbb{E}\left[\frac{\partial^{2}\tilde{P}_{BS}}{\partial x\partial y}\left\{ \int_{0}^{T}\rho_{t}\left(V_{1,t}\right)dB_{t}-\frac{1}{2}\int_{0}^{T}\rho_{t}^{2}\left(2v_{0,t}V_{1,t}\right)dt\right\} \left\{ \int_{0}^{T}\left(1-\rho_{t}^{2}\right)\left(2v_{0,t}V_{1,t}\right)dt\right\} \right]\nonumber \\
 & + & \mathcal{E}\,,\nonumber 
\end{eqnarray}
where $\mathcal{E}$ is the expansion error. From definition \eqref{eq:PtildeBS},
taking $\varepsilon=0$ in \eqref{eq:g_PBS} shows that
\begin{equation}
\mathbb{E}\left[\tilde{P}_{BS}\right]=P_{BS}\left(x_{0},\int_{0}^{T}v_{0,t}^{2}dt\right)\,.\label{eq:EPBS}
\end{equation}
Similarly,
\begin{equation}
\mathbb{E}\left[\frac{\partial^{i+j}\tilde{P}_{BS}}{\partial x^{i}y^{j}}\right]=\frac{\partial^{i+j}P_{BS}}{\partial x^{i}y^{j}}\left(x_{0},\int_{0}^{T}v_{0,t}^{2}dt\right)\,.\label{eq:EGPBS}
\end{equation}
The next part of this appendix is devoted to the computation of all
the constants $C_{x}$, $C_{y}$, $C_{xx}$, $C_{yy}$ and $C_{xy}$.
Our main tools will be the following relation between Black-Sholes
greeks:
\begin{equation}
\frac{\partial P_{BS}}{\partial y}\left(x,y\right)=\frac{1}{2}\left(\frac{\partial^{2}P_{BS}}{\partial x^{2}}\left(x,y\right)-\frac{\partial P_{BS}}{\partial x}\left(x,y\right)\right)\,,\label{eq:BS_identity}
\end{equation}
the following product identity between two stochastic processes $X$
and $Y$:
\begin{equation}
X_{T}Y_{T}=X_{0}Y_{0}+\int_{0}^{T}\left(X_{t}dY_{t}+Y_{t}dX_{t}+d\left\langle X,Y\right\rangle _{t}\right)\,,\label{eq:product_identity}
\end{equation}
and the following Lemma:
\begin{lem}
\label{lem:Malliavin_identity}If $G=l\left(\int_{0}^{T}\rho_{u}v_{0,u}dB_{u}\right)$
for a differentiable function $l$ with derivative $l^{(1)}$, then
its first Malliavin derivative $D^{B}\left(G\right)=\left(D_{s}^{B}\left(G\right)\right)_{s\geq0}$
is equal to $D_{s}^{B}\left(G\right)=l^{(1)}\left(\int_{0}^{T}\rho_{u}v_{0,u}dB_{u}\right)\rho_{s}v_{0,s}\mathds{1}\{s\leq T\}$.
Therefore, using Lemma 5.2 from \citet{Benhamou10},
\begin{equation}
\mathbb{E}\left[l\left(\int_{0}^{T}\rho_{u}v_{0,u}dB_{u}\right)\left(\int_{0}^{t}\alpha_{s}dB_{s}\right)\right]=\mathbb{E}\left[l^{(1)}\left(\int_{0}^{T}\rho_{u}v_{0,u}dB_{u}\right)\left(\int_{0}^{t}\rho_{s}v_{0,s}\alpha_{s}ds\right)\right]\label{eq:Malliavin_identity}
\end{equation}
for $0\leq t\leq T$. Most of the time, we will use \eqref{eq:Malliavin_identity}
with $t=T$.\end{lem}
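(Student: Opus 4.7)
The plan is to prove the statement in two steps: first establish the Malliavin derivative formula for $G$ via the chain rule, then combine it with the stated Malliavin duality result (Lemma 5.2 of \citet{Benhamou10}) to obtain the identity \eqref{eq:Malliavin_identity}.

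For the first step, observe that the integrand $u\mapsto\rho_u v_{0,u}\mathds{1}\{u\le T\}$ is deterministic, so the random variable $Z:=\int_0^T\rho_u v_{0,u}dB_u$ is a first-order Wiener integral, i.e.\ a centered Gaussian random variable. By the standard action of $D^B$ on Wiener integrals of deterministic integrands, $D^B_s Z=\rho_s v_{0,s}\mathds{1}\{s\le T\}$. Since $G=l(Z)$ is a $C^1$ real-valued transformation of a Malliavin-differentiable random variable, the chain rule for Malliavin calculus yields
\begin{equation*}
D^B_s G \;=\; l^{(1)}(Z)\,D^B_s Z \;=\; l^{(1)}\!\left(\int_0^T\rho_u v_{0,u}dB_u\right)\rho_s v_{0,s}\mathds{1}\{s\le T\},
\end{equation*}
which is precisely the claimed expression for the Malliavin derivative.

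For the second step, Lemma 5.2 of \citet{Benhamou10} is the Malliavin duality (integration-by-parts) formula
\begin{equation*}
\mathbb{E}\!\left[F\int_0^t\alpha_s dB_s\right] \;=\; \mathbb{E}\!\left[\int_0^t D^B_s(F)\,\alpha_s\,ds\right],
\end{equation*}
valid for $F$ in the Malliavin-Sobolev space and $\alpha$ an adapted square-integrable process. Substituting $F=G$ and inserting the explicit Malliavin derivative computed above gives
\begin{equation*}
\mathbb{E}\!\left[l(Z)\int_0^t\alpha_s dB_s\right] \;=\; \mathbb{E}\!\left[\int_0^t l^{(1)}(Z)\,\rho_s v_{0,s}\,\mathds{1}\{s\le T\}\,\alpha_s\,ds\right].
\end{equation*}
For $0\le t\le T$, the indicator is identically one on $[0,t]$, and $l^{(1)}(Z)$ is pathwise constant with respect to the inner $ds$-integral, so it can be factored out. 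This delivers exactly \eqref{eq:Malliavin_identity}.

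The only genuinely delicate point, and thus the main (mild) obstacle, is to verify that $G\in\mathbb{D}^{1,2}$ so that both the chain rule and the duality formula apply. In the intended application $l$ will be a spatial derivative of the Black-Scholes put price evaluated at strictly positive integrated variance, hence $C^\infty$ with at most polynomial growth; since $Z$ is Gaussian with moments of all orders, both $l(Z)$ and $l^{(1)}(Z)$ lie in $L^p$ for every $p\ge 1$, placing $G$ in $\mathbb{D}^{1,2}$. Under this integrability the two ingredients above are standard results (see e.g.\ Nualart's monograph), and no further difficulty arises.
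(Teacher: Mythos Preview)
Your proposal is correct and follows exactly the same approach as the paper: the paper does not give a separate proof for this lemma, but rather embeds the argument in the statement itself (chain rule for the Malliavin derivative of $l(Z)$ with $Z$ a Wiener integral, then invoke Lemma~5.2 of \citet{Benhamou10}). Your write-up simply makes explicit the two steps and the integrability check that the paper leaves implicit.
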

\begin{itemize}
\item Computation of $C_{x}$:
\end{itemize}
Using \eqref{eq:Malliavin_identity},
\[
\mathbb{E}\left[\frac{\partial\tilde{P}_{BS}}{\partial x}\left\{ \int_{0}^{T}\rho_{t}\left(V_{1,t}+\frac{1}{2}V_{2,t}\right)dB_{t}\right\} \right]=\mathbb{E}\left[\frac{\partial^{2}\tilde{P}_{BS}}{\partial x^{2}}\left\{ \int_{0}^{T}\rho_{t}^{2}\left(v_{0,t}V_{1,t}+\frac{1}{2}v_{0,t}V_{2,t}\right)dt\right\} \right]\,.
\]
Therefore, using \eqref{eq:BS_identity},
\[
C_{x}=\mathbb{E}\left[\frac{\partial\tilde{P}_{BS}}{\partial y}\left\{ \int_{0}^{T}\rho_{t}^{2}\left(2v_{0,t}V_{1,t}+v_{0,t}V_{2,t}\right)dt\right\} \right]-\mathbb{E}\left[\frac{\partial\tilde{P}_{BS}}{\partial x}\left\{ \frac{1}{2}\int_{0}^{T}\rho_{t}^{2}V_{1,t}^{2}dt\right\} \right]
\]

\begin{itemize}
\item Computation of $C_{xx}$:
\end{itemize}
Using \eqref{eq:product_identity}, then \eqref{eq:Malliavin_identity},
and finally \eqref{eq:BS_identity}, 
\begin{eqnarray*}
C_{xx} & = & \mathbb{E}\left[\frac{\partial^{2}\tilde{P}_{BS}}{\partial x^{2}}\int_{0}^{T}\left\{ \int_{0}^{t}\rho_{s}V_{1,s}dB_{s}-\int_{0}^{t}\rho_{s}^{2}v_{0,s}V_{1,s}ds\right\} \left\{ \rho_{t}V_{1,t}dB_{t}-\rho_{t}^{2}v_{0,t}V_{1,t}dt\right\} \right]+\frac{1}{2}\mathbb{E}\left[\frac{\partial^{2}\tilde{P}_{BS}}{\partial x^{2}}\int_{0}^{T}\rho_{t}^{2}V_{1,t}^{2}dt\right]\\
 & = & \mathbb{E}\left[\left(\frac{\partial^{3}\tilde{P}_{BS}}{\partial x^{3}}-\frac{\partial^{2}\tilde{P}_{BS}}{\partial x^{2}}\right)\int_{0}^{T}\left\{ \int_{0}^{t}\rho_{s}V_{1,s}dB_{s}-\int_{0}^{t}\rho_{s}^{2}v_{0,s}V_{1,s}ds\right\} \rho_{t}^{2}v_{0,t}V_{1,t}dt\right]+\frac{1}{2}\mathbb{E}\left[\frac{\partial^{2}\tilde{P}_{BS}}{\partial x^{2}}\int_{0}^{T}\rho_{t}^{2}V_{1,t}^{2}dt\right]\\
 & = & 2\mathbb{E}\left[\frac{\partial^{2}\tilde{P}_{BS}}{\partial x\partial y}\int_{0}^{T}\left\{ \int_{0}^{t}\rho_{s}V_{1,s}dB_{s}-\int_{0}^{t}\rho_{s}^{2}v_{0,s}V_{1,s}ds\right\} \rho_{t}^{2}v_{0,t}V_{1,t}dt\right]+\frac{1}{2}\mathbb{E}\left[\frac{\partial^{2}\tilde{P}_{BS}}{\partial x^{2}}\int_{0}^{T}\rho_{t}^{2}V_{1,t}^{2}dt\right]
\end{eqnarray*}
Therefore, so far, using \eqref{eq:BS_identity},
\begin{eqnarray*}
C_{x}+C_{y}+C_{xx} & = & \mathbb{E}\left[\frac{\partial\tilde{P}_{BS}}{\partial y}\left\{ \int_{0}^{T}\left(2v_{0,t}V_{1,t}+V_{1,t}^{2}+v_{0,t}V_{2,t}\right)dt\right\} \right]\\
(\tilde{C}_{xy}:=) & + & 2\mathbb{E}\left[\frac{\partial^{2}\tilde{P}_{BS}}{\partial x\partial y}\int_{0}^{T}\left\{ \int_{0}^{t}\rho_{s}V_{1,s}dB_{s}-\int_{0}^{t}\rho_{s}^{2}v_{0,s}V_{1,s}ds\right\} \rho_{t}^{2}v_{0,t}V_{1,t}dt\right]
\end{eqnarray*}

\begin{itemize}
\item Computation of $C_{xy}$:
\end{itemize}
Using \eqref{eq:product_identity},
\begin{eqnarray*}
C_{xy} & = & \mathbb{E}\left[\frac{\partial^{2}\tilde{P}_{BS}}{\partial x\partial y}\left\{ \int_{0}^{T}\rho_{t}V_{1,t}dB_{t}-\int_{0}^{T}\rho_{t}^{2}v_{0,t}V_{1,t}dt\right\} \left\{ \int_{0}^{T}\left(1-\rho_{t}^{2}\right)\left(2v_{0,t}V_{1,t}\right)dt\right\} \right]\\
 & = & \mathbb{E}\left[\frac{\partial^{2}\tilde{P}_{BS}}{\partial x\partial y}\int_{0}^{T}\left(\int_{0}^{t}\rho_{s}V_{1,s}dB_{s}-\int_{0}^{t}\rho_{s}^{2}v_{0,s}V_{1,s}ds\right)\left(1-\rho_{t}^{2}\right)\left(2v_{0,t}V_{1,t}\right)dt\right]\\
 & + & \mathbb{E}\left[\frac{\partial^{2}\tilde{P}_{BS}}{\partial x\partial y}\int_{0}^{T}\left\{ \int_{0}^{t}\left(1-\rho_{s}^{2}\right)\left(2v_{0,s}V_{1,s}\right)ds\right\} \left\{ \rho_{t}V_{1,t}dB_{t}-\rho_{t}^{2}v_{0,t}V_{1,t}dt\right\} \right]
\end{eqnarray*}

Using \eqref{eq:Malliavin_identity},
\begin{eqnarray*}
C_{xy}+\tilde{C}_{xy} & = & \mathbb{E}\left[\frac{\partial^{2}\tilde{P}_{BS}}{\partial x\partial y}\int_{0}^{T}\left(\int_{0}^{t}\rho_{s}V_{1,s}dB_{s}-\int_{0}^{t}\rho_{s}^{2}v_{0,s}V_{1,s}ds\right)\left(2v_{0,t}V_{1,t}\right)dt\right]\\
 & + & \mathbb{E}\left[\frac{\partial^{2}\tilde{P}_{BS}}{\partial x\partial y}\int_{0}^{T}\left\{ \int_{0}^{t}\left(1-\rho_{s}^{2}\right)\left(2v_{0,s}V_{1,s}\right)ds\right\} \left\{ \rho_{t}V_{1,t}dB_{t}-\rho_{t}^{2}v_{0,t}V_{1,t}dt\right\} \right]
\end{eqnarray*}
Define $G=\frac{\partial^{2}\tilde{P}_{BS}}{\partial x\partial y}v_{0,t}V_{1,t}$.
Then $D_{s}^{B}\left(G\right)$, its first Malliavin derivative w.r.t.
$B$, is given by
\begin{eqnarray*}
D_{s}^{B}\left(G\right) & = & v_{0,t}V_{1,t}D_{s}^{B}\left(\frac{\partial^{2}\tilde{P}_{BS}}{\partial x\partial y}\right)+\frac{\partial^{2}\tilde{P}_{BS}}{\partial x\partial y}v_{0,t}D_{s}^{B}\left(V_{1,t}\right)\\
 & = & v_{0,t}V_{1,t}\frac{\partial^{3}\tilde{P}_{BS}}{\partial x^{2}\partial y}\rho_{s}v_{0,s}\mathds{1}\{s\leq T\}+\frac{\partial^{2}\tilde{P}_{BS}}{\partial x\partial y}v_{0,t}e^{-\int_{0}^{t}\kappa_{z}dz}e^{\int_{0}^{s}\kappa_{z}dz}\lambda_{s}v_{0,s}\mathds{1}\{s\leq t\}
\end{eqnarray*}
using the definition of $V_{1,t}$ \eqref{eq:Vn}. Therefore, using
Lemma 5.2 in \citet{Benhamou10},
\begin{eqnarray*}
 &  & 2\int_{0}^{T}\mathbb{E}\left[\frac{\partial^{2}\tilde{P}_{BS}}{\partial x\partial y}v_{0,t}V_{1,t}\left(\int_{0}^{t}\rho_{s}V_{1,s}dB_{s}\right)\right]dt\\
 & = & 2\int_{0}^{T}\mathbb{E}\left[G\left\{ \int_{0}^{t}\rho_{s}V_{1,s}dB_{s}\right\} \right]dt\\
 & = & 2\int_{0}^{T}\mathbb{E}\left[\int_{0}^{t}\rho_{s}V_{1,s}D_{s}^{B}\left(G\right)ds\right]dt\\
 & = & 2\mathbb{E}\left[\frac{\partial^{3}\tilde{P}_{BS}}{\partial x^{2}\partial y}\int_{0}^{T}\left(\int_{0}^{t}\rho_{s}^{2}v_{0,s}V_{1,s}ds\right)v_{0,t}V_{1,t}dt\right]+2\int_{0}^{T}v_{0,t}e^{-\int_{0}^{t}\kappa_{z}dz}\mathbb{E}\left[\frac{\partial^{2}\tilde{P}_{BS}}{\partial x\partial y}\int_{0}^{t}e^{\int_{0}^{s}\kappa_{z}dz}\lambda_{s}\rho_{s}v_{0,s}V_{1,s}ds\right]dt\\
 & = & 2\mathbb{E}\left[\frac{\partial^{3}\tilde{P}_{BS}}{\partial x^{2}\partial y}\int_{0}^{T}\left(\int_{0}^{t}\rho_{s}^{2}v_{0,s}V_{1,s}ds\right)v_{0,t}V_{1,t}dt\right]+\mathbb{E}\left[\frac{\partial\tilde{P}_{BS}}{\partial y}\int_{0}^{T}v_{0,t}V_{2,t}dt\right]
\end{eqnarray*}
where we used, for the last equality, equation \eqref{eq:Malliavin_identity}
and the definition of $V_{2,t}$ (equation \eqref{eq:Vn}). Therefore,
using \eqref{eq:Malliavin_identity} and then \eqref{eq:BS_identity},
\begin{eqnarray*}
C_{xy}+\tilde{C}_{xy} & = & 4\mathbb{E}\left[\frac{\partial^{2}\tilde{P}_{BS}}{\partial y^{2}}\int_{0}^{T}\left(\int_{0}^{t}\rho_{s}^{2}v_{0,s}V_{1,s}ds\right)v_{0,t}V_{1,t}dt\right]\\
 & + & 4\mathbb{E}\left[\frac{\partial^{2}\tilde{P}_{BS}}{\partial y^{2}}\int_{0}^{T}\left\{ \int_{0}^{t}\left(1-\rho_{s}^{2}\right)v_{0,s}V_{1,s}ds\right\} \rho_{t}^{2}v_{0,t}V_{1,t}dt\right]\\
 & + & \mathbb{E}\left[\frac{\partial\tilde{P}_{BS}}{\partial y}\int_{0}^{T}v_{0,t}V_{2,t}dt\right]
\end{eqnarray*}
Using \eqref{eq:product_identity},
\begin{eqnarray*}
C_{yy} & = & 4\mathbb{E}\left[\frac{\partial^{2}\tilde{P}_{BS}}{\partial y^{2}}\int_{0}^{T}\left\{ \int_{0}^{t}\left(1-\rho_{s}^{2}\right)v_{0,s}V_{1,s}ds\right\} \left(1-\rho_{t}^{2}\right)v_{0,t}V_{1,t}dt\right]\,.
\end{eqnarray*}
Thus

\begin{eqnarray*}
C_{xy}+\tilde{C}_{xy}+C_{yy} & = & 4\mathbb{E}\left[\frac{\partial^{2}\tilde{P}_{BS}}{\partial y^{2}}\int_{0}^{T}\left(\int_{0}^{t}v_{0,s}V_{1,s}ds\right)v_{0,t}V_{1,t}dt\right]+\mathbb{E}\left[\frac{\partial\tilde{P}_{BS}}{\partial y}\int_{0}^{T}v_{0,t}V_{2,t}dt\right]\\
 & = & 2\mathbb{E}\left[\frac{\partial^{2}\tilde{P}_{BS}}{\partial y^{2}}\left\{ \int_{0}^{T}v_{0,t}V_{1,t}dt\right\} ^{2}\right]+\mathbb{E}\left[\frac{\partial\tilde{P}_{BS}}{\partial y}\int_{0}^{T}v_{0,t}V_{2,t}dt\right]\,.
\end{eqnarray*}
At this stage, combining all the terms together, the second-order
expansion for $g(1)$ (equation \eqref{eq:g(1)}) simply becomes
\begin{align}
g(1) & =P_{BS}\left(x_{0},\int_{0}^{T}v_{0,t}^{2}dt\right)+\mathbb{E}\left[\frac{\partial\tilde{P}_{BS}}{\partial y}\int_{0}^{T}\left(2v_{0,t}V_{1,t}+V_{1,t}^{2}+2v_{0,t}V_{2,t}\right)dt\right]+2\mathbb{E}\left[\frac{\partial^{2}\tilde{P}_{BS}}{\partial y^{2}}\left\{ \int_{0}^{T}v_{0,t}V_{1,t}dt\right\} ^{2}\right]+\mathcal{E}\label{eq:g(1)simple}
\end{align}
Equation \eqref{eq:g(1)simple} is much simpler than equation \eqref{eq:g(1)},
but still contains the stochastic processes $V_{1,t}$ and $V_{2,t}$.
The last part of the proof is to simplify \eqref{eq:g(1)simple} further
to get an expansion with explicit, deterministic coefficients. 

Recall the short-hand notation for deterministic integrals from equations
\eqref{eq:omega_0} and \eqref{eq:omega_n}:

\begin{eqnarray*}
\omega_{t,T}^{\left(\kappa,l\right)}=\int_{t}^{T}e^{\int_{0}^{u}\kappa_{z}dz}l_{u}du &  & \forall t\in\left[0,T\right]
\end{eqnarray*}
\begin{eqnarray*}
\omega_{t,T}^{\left(\kappa_{n},l_{n}\right),\ldots,\left(\kappa_{1},l_{1}\right)}=\omega_{t,T}^{\left(\kappa_{n},l_{n}\omega_{.,T}^{\left(\kappa_{n-1},l_{n-1}\right),\ldots,\left(\kappa_{1},l_{1}\right)}\right)} &  & \forall t\in\left[0,T\right]\,.
\end{eqnarray*}

We will use extensively the following Lemma:
\begin{lem}
\label{lem:IPP}(Lemma 5.4 in \citet{Benhamou10}) For any deterministic
integrable function $f$ and any continuous semimartingale $Z$ such
that $Z_{0}=0$,
\[
\int_{0}^{T}f(t)Z_{t}dt=\int_{0}^{T}\omega_{t,T}^{(0,f)}dZ_{t}
\]
\end{lem}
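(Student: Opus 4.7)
The plan is to recognize that the claimed identity is a stochastic integration by parts in disguise, since with the convention $\omega_{t,T}^{(0,f)}=\int_t^T e^{\int_0^u 0\,dz}f_u\,du=\int_t^T f(u)\,du$, the target identity reads
\[
\int_0^T f(t)Z_t\,dt \;=\; \int_0^T\!\!\left(\int_t^T f(u)\,du\right)dZ_t.
\]
So the natural strategy is to introduce the deterministic antiderivative $F(t):=\int_0^t f(s)\,ds$ and exploit the fact that $F$ has finite variation, which kills the quadratic covariation term in the Itô product rule.

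Concretely, I would first verify integrability/regularity so that the two sides are well-defined: $F$ is absolutely continuous with $dF(t)=f(t)\,dt$; $Z$ is a continuous semimartingale, so stochastic integrals of bounded deterministic integrands against $Z$ are well-defined; and $[F,Z]\equiv 0$ because $F$ is of finite variation. Then I would apply the product rule (equation \eqref{eq:product_identity} used earlier in the appendix) to $F(t)Z_t$ on $[0,T]$, yielding
\[
F(T)Z_T \;=\; F(0)Z_0 + \int_0^T F(t)\,dZ_t + \int_0^T Z_t\,dF(t) + [F,Z]_T.
\]
Using $F(0)=0$, $Z_0=0$, $[F,Z]\equiv 0$, and $dF(t)=f(t)\,dt$, this collapses to
\[
F(T)Z_T \;=\; \int_0^T F(t)\,dZ_t + \int_0^T f(t)Z_t\,dt.
\]

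The last step is to rewrite the left-hand side as a stochastic integral as well. Since $F(T)$ is a deterministic constant and $Z_0=0$, we have $F(T)Z_T=F(T)(Z_T-Z_0)=\int_0^T F(T)\,dZ_t$. Substituting and rearranging,
\[
\int_0^T f(t)Z_t\,dt \;=\; \int_0^T \bigl(F(T)-F(t)\bigr)\,dZ_t \;=\; \int_0^T\!\!\left(\int_t^T f(u)\,du\right)dZ_t \;=\; \int_0^T \omega_{t,T}^{(0,f)}\,dZ_t,
\]
which is the claimed identity. The only potential obstacle is checking that $f$ being merely deterministic and integrable suffices for $\omega_{.,T}^{(0,f)}$ to be a valid integrand against $dZ$; this follows since $\omega_{t,T}^{(0,f)}$ is deterministic, continuous in $t$, and bounded by $\int_0^T|f(u)|\,du<\infty$, so it lies in the admissible class for integration against any continuous semimartingale. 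An alternative, equally short route would be to invoke the deterministic Fubini theorem after writing $Z_t=\int_0^t dZ_s$ and interchanging the $ds$ and $dt$ integrals, but the product-rule argument above is the cleanest and parallels the other derivations in this appendix.
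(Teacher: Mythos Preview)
Your proof is correct and is essentially the same integration-by-parts argument as the paper's. The only cosmetic difference is that the paper applies the It\^o product rule directly to $\omega_{t,T}^{(0,f)}Z_t$ rather than to $F(t)Z_t$; since $\omega_{t,T}^{(0,f)}=F(T)-F(t)$ vanishes at $t=T$ and $Z_0=0$, both boundary terms drop out immediately and the identity follows in one line, whereas your route requires the extra step of rewriting $F(T)Z_T=\int_0^T F(T)\,dZ_t$.
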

\begin{proof}
Apply It\={o}'s lemma to the product $\omega_{t,T}^{(0,f)}Z_{t}$.
\end{proof}
The computation of each type of expectation in equation \eqref{eq:g(1)simple}
is summarized in the following Lemma.
\begin{lem}
\label{lem:determinify}The following equalities hold
\begin{align}
\mathbb{E}\left[l\left(\int_{0}^{T}\rho_{t}v_{0,t}dB_{t}\right)\int_{0}^{T}\beta_{t}V_{1,t}dt\right] & =\omega_{0,T}^{(\kappa,\rho\lambda v_{0,.}^{2}),(-\kappa,\beta)}\mathbb{E}\left[l^{(1)}\left(\int_{0}^{T}\rho_{t}v_{0,t}dB_{t}\right)\right]\label{eq:lV1}\\
\mathbb{E}\left[l\left(\int_{0}^{T}\rho_{t}v_{0,t}dB_{t}\right)\int_{0}^{T}\beta_{t}V_{2,t}dt\right] & =\omega_{0,T}^{(\kappa,\rho\lambda v_{0,.}^{2}),(0,2\rho\lambda v_{0,.}),(-\kappa,\beta)}\mathbb{E}\left[l^{(2)}\left(\int_{0}^{T}\rho_{t}v_{0,t}dB_{t}\right)\right]\label{eq:lV2}\\
\mathbb{E}\left[l\left(\int_{0}^{T}\rho_{t}v_{0,t}dB_{t}\right)\int_{0}^{T}\beta_{t}V_{1,t}^{2}dt\right] & =\omega_{0,T}^{(2\kappa,\lambda^{2}v_{0,.}^{2}),(-2\kappa,\beta)}\mathbb{E}\left[l\left(\int_{0}^{T}\rho_{t}v_{0,t}dB_{t}\right)\right]\nonumber \\
 & +2\omega_{0,T}^{(\kappa,\rho\lambda v_{0,.}^{2}),(\kappa,\rho\lambda v_{0,.}^{2}),(-2\kappa,\beta)}\mathbb{E}\left[l^{(2)}\left(\int_{0}^{T}\rho_{t}v_{0,t}dB_{t}\right)\right]\label{eq:lV12}\\
\mathbb{E}\left[l\left(\int_{0}^{T}\rho_{t}v_{0,t}dB_{t}\right)\left\{ \int_{0}^{T}\beta_{t}V_{1,t}dt\right\} ^{2}\right] & =2\omega_{0,T}^{(2\kappa,\lambda^{2}v_{0,.}^{2}),(-\kappa,\beta),(-\kappa,\beta)}\mathbb{E}\left[l\left(\int_{0}^{T}\rho_{t}v_{0,t}dB_{t}\right)\right]\nonumber \\
 & +\left\{ \omega_{0,T}^{(\kappa,\rho\lambda v_{0,.}^{2}),(-\kappa,\beta)}\right\} ^{2}\mathbb{E}\left[l^{(2)}\left(\int_{0}^{T}\rho_{t}v_{0,t}dB_{t}\right)\right]\label{eq:lV1_2}
\end{align}
where $\beta$, a deterministic function, and $l$, a twice-differentiable
function, are such that these expectations exist.\end{lem}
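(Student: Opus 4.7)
\textbf{Proof plan for Lemma \ref{lem:determinify}.} The strategy for all four identities is the same: convert the time integral of a stochastic process against a deterministic weight into a pure stochastic integral (via the IPP Lemma 5.4 of \citet{Benhamou10}), then apply the Malliavin identity \eqref{eq:Malliavin_identity} once or twice to eliminate the $dB$ terms at the cost of raising the derivative order of $l$. Throughout, I will use the explicit form $V_{1,t}=\int_{0}^{t}e^{-\int_{s}^{t}\kappa_{z}dz}\lambda_{s}v_{0,s}dB_{s}$, so that $dV_{1,t}=-\kappa_{t}V_{1,t}dt+\lambda_{t}v_{0,t}dB_{t}$ with $V_{1,0}=0$, and similarly $dV_{2,t}=-\kappa_{t}V_{2,t}dt+2\lambda_{t}V_{1,t}dB_{t}$ with $V_{2,0}=0$.

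For \eqref{eq:lV1}, I would write $V_{1,t}=e^{-\int_{0}^{t}\kappa_{z}dz}Z_{t}$ with $Z_{t}=\int_{0}^{t}e^{\int_{0}^{s}\kappa_{z}dz}\lambda_{s}v_{0,s}dB_{s}$, so that $\int_{0}^{T}\beta_{t}V_{1,t}dt=\int_{0}^{T}\bigl(\beta_{t}e^{-\int_{0}^{t}\kappa_{z}dz}\bigr)Z_{t}dt$, and then apply Lemma 5.4 with $f(t)=\beta_{t}e^{-\int_{0}^{t}\kappa_{z}dz}$ to turn this into $\int_{0}^{T}\omega_{t,T}^{(-\kappa,\beta)}e^{\int_{0}^{t}\kappa_{z}dz}\lambda_{t}v_{0,t}dB_{t}$. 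Applying \eqref{eq:Malliavin_identity} then brings out $\rho_{t}v_{0,t}$, and the remaining deterministic integral is exactly $\omega_{0,T}^{(\kappa,\rho\lambda v_{0,.}^{2}),(-\kappa,\beta)}$. Identity \eqref{eq:lV2} is handled the same way, peeling off one layer of the IPP to express $\int_{0}^{T}\beta_{t}V_{2,t}dt$ as a stochastic integral proportional to $V_{1,t}$, applying one Malliavin step, and then invoking \eqref{eq:lV1} with a new effective weight $\tilde{\beta}_{t}=2\rho_{t}\lambda_{t}v_{0,t}e^{\int_{0}^{t}\kappa_{z}dz}\omega_{t,T}^{(-\kappa,\beta)}$; unwinding the $\omega$ composition gives precisely $\omega_{0,T}^{(\kappa,\rho\lambda v_{0,.}^{2}),(0,2\rho\lambda v_{0,.}),(-\kappa,\beta)}$.

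For \eqref{eq:lV12}, I would first apply It\^o's formula to obtain $d(V_{1,t}^{2})=-2\kappa_{t}V_{1,t}^{2}dt+2\lambda_{t}v_{0,t}V_{1,t}dB_{t}+\lambda_{t}^{2}v_{0,t}^{2}dt$, solve this linear SDE explicitly, and split $\int_{0}^{T}\beta_{t}V_{1,t}^{2}dt$ into a purely deterministic piece (which after swapping the order of integration produces $\omega_{0,T}^{(2\kappa,\lambda^{2}v_{0,.}^{2}),(-2\kappa,\beta)}$) plus a stochastic piece involving $V_{1,t}$. The latter is handled by one Malliavin step followed by \eqref{eq:lV1}, yielding the second term with the expected composite $\omega$. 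Identity \eqref{eq:lV1_2} is the most delicate: I would first rewrite $J_{T}:=\int_{0}^{T}\beta_{t}V_{1,t}dt$ as a pure stochastic integral $\int_{0}^{T}a_{t}dB_{t}$ with $a_{t}=\omega_{t,T}^{(-\kappa,\beta)}e^{\int_{0}^{t}\kappa_{z}dz}\lambda_{t}v_{0,t}$ (using the IPP step from the proof of \eqref{eq:lV1}), then use $J_{T}^{2}=\int_{0}^{T}a_{t}^{2}dt+2\int_{0}^{T}J_{t}a_{t}dB_{t}$. Applying \eqref{eq:Malliavin_identity} twice to the martingale term (with an intermediate IPP to handle the $J_{t}$ factor) produces a factor $\tfrac{1}{2}(\int_{0}^{T}f(t)dt)^{2}$ where $f(t)=a_{t}\rho_{t}v_{0,t}$; identifying $\int_{0}^{T}f(t)dt=\omega_{0,T}^{(\kappa,\rho\lambda v_{0,.}^{2}),(-\kappa,\beta)}$ and verifying $\int_{0}^{T}a_{t}^{2}dt=2\,\omega_{0,T}^{(2\kappa,\lambda^{2}v_{0,.}^{2}),(-\kappa,\beta),(-\kappa,\beta)}$ (which follows from the elementary identity $\int_{s}^{T}h(u)\int_{u}^{T}h(w)dw\,du=\tfrac{1}{2}\bigl(\int_{s}^{T}h(u)du\bigr)^{2}$ applied to $h(u)=e^{-\int_{0}^{u}\kappa_{z}dz}\beta_{u}$) produces exactly the stated formula.

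The main obstacle is the bookkeeping in \eqref{eq:lV1_2}: one must apply Malliavin twice while keeping track of which integrand is deterministic versus stochastic, and then recognise the resulting iterated deterministic integrals as the compositions $\omega_{0,T}^{(\cdots),(\cdots),(\cdots)}$ appearing in the statement. The algebraic reduction of nested $\omega$ operators via the antiderivative identity above is the step that does all the work; once that is in hand, the pattern of ``IPP then Malliavin, as many times as there are $dB$'s'' mechanically produces the right coefficients.
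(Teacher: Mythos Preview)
Your argument is correct and, for identities \eqref{eq:lV1}, \eqref{eq:lV2} and \eqref{eq:lV12}, follows the paper's proof essentially step for step: write $V_{n,t}$ as an exponentially damped Wiener integral, apply the IPP Lemma to convert $\int_{0}^{T}\beta_{t}(\cdots)dt$ into a single $dB$-integral, then use the Malliavin identity \eqref{eq:Malliavin_identity} to trade each $dB$ for a derivative of $l$, invoking the previously established identity when a $V_{1}$ factor remains.

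For \eqref{eq:lV1_2} you take a genuinely different route. The paper expands $\bigl\{\int_{0}^{T}\beta_{t}V_{1,t}dt\bigr\}^{2}$ via the product rule into $2\int_{0}^{T}\{\int_{0}^{t}\beta_{s}V_{1,s}ds\}\beta_{t}V_{1,t}dt$, applies IPP, then feeds one piece back into \eqref{eq:lV12}; this produces four nested $\omega$'s and requires the algebraic identity $4\omega_{0,T}^{(\kappa,\rho\lambda v_{0,.}^{2}),(\kappa,\rho\lambda v_{0,.}^{2}),(-\kappa,\beta),(-\kappa,\beta)}+2\omega_{0,T}^{(\kappa,\rho\lambda v_{0,.}^{2}),(-\kappa,\beta),(\kappa,\rho\lambda v_{0,.}^{2}),(-\kappa,\beta)}=\bigl\{\omega_{0,T}^{(\kappa,\rho\lambda v_{0,.}^{2}),(-\kappa,\beta)}\bigr\}^{2}$ from \citet{Benhamou10} to collapse them. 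You instead first rewrite $J_{T}$ as a pure Wiener integral $\int_{0}^{T}a_{t}dB_{t}$ and apply It\^o to $J_{T}^{2}$ directly, so that the deterministic term $\int_{0}^{T}a_{t}^{2}dt$ immediately gives $2\,\omega_{0,T}^{(2\kappa,\lambda^{2}v_{0,.}^{2}),(-\kappa,\beta),(-\kappa,\beta)}$ via the elementary square identity, and two Malliavin steps on the martingale part yield $\bigl\{\omega_{0,T}^{(\kappa,\rho\lambda v_{0,.}^{2}),(-\kappa,\beta)}\bigr\}^{2}$ without ever producing the four-fold $\omega$'s. Your path is shorter and avoids the external algebraic identity; the paper's path has the advantage of reusing \eqref{eq:lV12} and keeping all four proofs in a uniform ``expand the square of the original time integral'' template.
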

\begin{proof}
Using Lemma \ref{lem:IPP} and equation \eqref{eq:Malliavin_identity},
\begin{align*}
\mathbb{E}\left[l\left(\int_{0}^{T}\rho_{t}v_{0,t}dB_{t}\right)\int_{0}^{T}\beta_{t}V_{1,t}dt\right] & =\mathbb{E}\left[l\left(\int_{0}^{T}\rho_{t}v_{0,t}dB_{t}\right)\int_{0}^{T}\beta_{t}e^{-\int_{0}^{t}\kappa_{z}dz}\int_{0}^{t}e^{\int_{0}^{s}\kappa_{z}dz}\lambda_{s}v_{0,s}dB_{s}dt\right]\\
 & =\mathbb{E}\left[l\left(\int_{0}^{T}\rho_{t}v_{0,t}dB_{t}\right)\int_{0}^{T}\omega_{t,T}^{(-\kappa,\beta)}e^{\int_{0}^{t}\kappa_{z}dz}\lambda_{t}v_{0,t}dB_{t}\right]\\
 & =\omega_{0,T}^{(\kappa,\rho\lambda v_{0,.}^{2}),(-\kappa,\beta)}\mathbb{E}\left[l^{(1)}\left(\int_{0}^{T}\rho_{t}v_{0,t}dB_{t}\right)\right]\,,
\end{align*}
\begin{align*}
 & \mathbb{E}\left[l\left(\int_{0}^{T}\rho_{t}v_{0,t}dB_{t}\right)\int_{0}^{T}\beta_{t}V_{2,t}dt\right]=\mathbb{E}\left[l\left(\int_{0}^{T}\rho_{t}v_{0,t}dB_{t}\right)\int_{0}^{T}\beta_{t}e^{-\int_{0}^{t}\kappa_{z}dz}\int_{0}^{t}e^{\int_{0}^{s}\kappa_{z}dz}2\lambda_{s}V_{1,s}dB_{s}dt\right]\\
 & =\mathbb{E}\left[l\left(\int_{0}^{T}\rho_{t}v_{0,t}dB_{t}\right)\int_{0}^{T}\omega_{t,T}^{(-\kappa,\beta)}e^{\int_{0}^{t}\kappa_{z}dz}2\lambda_{t}V_{1,t}dB_{t}\right]=\mathbb{E}\left[l^{(1)}\left(\int_{0}^{T}\rho_{t}v_{0,t}dB_{t}\right)\int_{0}^{T}\omega_{t,T}^{(-\kappa,\beta)}e^{\int_{0}^{t}\kappa_{z}dz}2\rho_{t}\lambda_{t}v_{0,t}V_{1,t}dt\right]\\
 & =\mathbb{E}\left[l^{(1)}\left(\int_{0}^{T}\rho_{t}v_{0,t}dB_{t}\right)\int_{0}^{T}\omega_{t,T}^{(-\kappa,\beta)}2\rho_{t}\lambda_{t}v_{0,t}\int_{0}^{t}e^{\int_{0}^{s}\kappa_{z}dz}\lambda_{s}v_{0,s}dB_{s}dt\right]\\
 & =\mathbb{E}\left[l^{(1)}\left(\int_{0}^{T}\rho_{t}v_{0,t}dB_{t}\right)\int_{0}^{T}\omega_{t,T}^{(0,2\rho\lambda v_{0,.}),(-\kappa,\beta)}e^{\int_{0}^{t}\kappa_{z}dz}\lambda_{t}v_{0,t}dB_{t}\right]=\omega_{0,T}^{(\kappa,\rho\lambda v_{0,.}^{2}),(0,2\rho\lambda v_{0,.}),(-\kappa,\beta)}\mathbb{E}\left[l^{(2)}\left(\int_{0}^{T}\rho_{t}v_{0,t}dB_{t}\right)\right]\,.
\end{align*}
Using \eqref{eq:product_identity},
\begin{align*}
 & \mathbb{E}\left[l\left(\int_{0}^{T}\rho_{t}v_{0,t}dB_{t}\right)\int_{0}^{T}\beta_{t}V_{1,t}^{2}dt\right]=\mathbb{E}\left[l\left(\int_{0}^{T}\rho_{t}v_{0,t}dB_{t}\right)\int_{0}^{T}\beta_{t}e^{-2\int_{0}^{t}\kappa_{z}dz}\left\{ \int_{0}^{t}e^{\int_{0}^{s}\kappa_{z}dz}\lambda_{s}v_{0,s}dB_{s}\right\} ^{2}dt\right]\\
 & =\mathbb{E}\left[l\left(\int_{0}^{T}\rho_{t}v_{0,t}dB_{t}\right)\int_{0}^{T}\beta_{t}e^{-2\int_{0}^{t}\kappa_{z}dz}\int_{0}^{t}2\left\{ \int_{0}^{s}e^{\int_{0}^{u}\kappa_{z}dz}\lambda_{u}v_{0,u}dB_{u}\right\} e^{\int_{0}^{s}\kappa_{z}dz}\lambda_{s}v_{0,s}dB_{s}dt\right]\\
 & +\mathbb{E}\left[l\left(\int_{0}^{T}\rho_{t}v_{0,t}dB_{t}\right)\int_{0}^{T}\beta_{t}e^{-2\int_{0}^{t}\kappa_{z}dz}\left\{ \int_{0}^{t}e^{2\int_{0}^{s}\kappa_{z}dz}\lambda_{s}^{2}v_{0,s}^{2}ds\right\} dt\right]\,.
\end{align*}
First, using Lemma \ref{lem:IPP},
\begin{align*}
 & \mathbb{E}\left[l\left(\int_{0}^{T}\rho_{t}v_{0,t}dB_{t}\right)\int_{0}^{T}\beta_{t}e^{-2\int_{0}^{t}\kappa_{z}dz}\left\{ \int_{0}^{t}e^{2\int_{0}^{s}\kappa_{z}dz}\lambda_{s}^{2}v_{0,s}^{2}ds\right\} dt\right]=\omega_{0,T}^{(2\kappa,\lambda^{2}v_{0,.}^{2}),(-2\kappa,\beta)}\mathbb{E}\left[l\left(\int_{0}^{T}\rho_{t}v_{0,t}dB_{t}\right)\right]\,.
\end{align*}
Then, using Lemma \ref{lem:IPP}, Lemma \eqref{lem:Malliavin_identity}
and then equation \eqref{eq:lV1},
\begin{align*}
 & \mathbb{E}\left[l\left(\int_{0}^{T}\rho_{t}v_{0,t}dB_{t}\right)\int_{0}^{T}\beta_{t}e^{-2\int_{0}^{t}\kappa_{z}dz}\int_{0}^{t}2\left\{ \int_{0}^{s}e^{\int_{0}^{u}\kappa_{z}dz}\lambda_{u}v_{0,u}dB_{u}\right\} e^{\int_{0}^{s}\kappa_{z}dz}\lambda_{s}v_{0,s}dB_{s}dt\right]\\
 & =\mathbb{E}\left[l\left(\int_{0}^{T}\rho_{t}v_{0,t}dB_{t}\right)\int_{0}^{T}2\omega_{t,T}^{(-2\kappa,\beta)}e^{\int_{0}^{t}\kappa_{z}dz}\lambda_{t}v_{0,t}\left\{ \int_{0}^{t}e^{\int_{0}^{s}\kappa_{z}dz}\lambda_{s}v_{0,s}dB_{s}\right\} dB_{t}\right]\\
 & =\mathbb{E}\left[l^{(1)}\left(\int_{0}^{T}\rho_{t}v_{0,t}dB_{t}\right)\int_{0}^{T}2\omega_{t,T}^{(-2\kappa,\beta)}e^{2\int_{0}^{t}\kappa_{z}dz}\rho_{t}\lambda_{t}v_{0,t}^{2}V_{1,t}dt\right]\\
 & =2\omega_{0,T}^{(\kappa,\rho\lambda v_{0,.}^{2}),(\kappa,\rho\lambda v_{0,.}^{2}),(-2\kappa,\beta)}\mathbb{E}\left[l^{(2)}\left(\int_{0}^{T}\rho_{t}v_{0,t}dB_{t}\right)\right]\,,
\end{align*}
yielding \eqref{eq:lV12}. Finally, using \eqref{eq:product_identity}
and Lemma \ref{lem:IPP},
\begin{align*}
\mathbb{E}\left[l\left(\int_{0}^{T}\rho_{t}v_{0,t}dB_{t}\right)\left\{ \int_{0}^{T}\beta_{t}V_{1,t}dt\right\} ^{2}\right] & =\mathbb{E}\left[l\left(\int_{0}^{T}\rho_{t}v_{0,t}dB_{t}\right)2\int_{0}^{T}\left\{ \int_{0}^{t}\beta_{s}V_{1,s}ds\right\} \beta_{t}V_{1,t}dt\right]\\
 & =2\mathbb{E}\left[l\left(\int_{0}^{T}\rho_{t}v_{0,t}dB_{t}\right)\int_{0}^{T}\omega_{t,T}^{(-\kappa,\beta)}e^{\int_{0}^{t}\kappa_{z}dz}\beta_{t}V_{1,t}^{2}dt\right]\\
 & +2\mathbb{E}\left[l\left(\int_{0}^{T}\rho_{t}v_{0,t}dB_{t}\right)\int_{0}^{T}\omega_{t,T}^{(-\kappa,\beta)}\left\{ \int_{0}^{t}\beta_{s}V_{1,s}ds\right\} e^{\int_{0}^{t}\kappa_{z}dz}\lambda_{t}v_{0,t}dB_{t}\right]
\end{align*}
Using equation \eqref{eq:lV12},
\begin{align*}
2\mathbb{E}\left[l\left(\int_{0}^{T}\rho_{t}v_{0,t}dB_{t}\right)\int_{0}^{T}\omega_{t,T}^{(-\kappa,\beta)}e^{\int_{0}^{t}\kappa_{z}dz}\beta_{t}V_{1,t}^{2}dt\right] & =2\omega_{0,T}^{(2\kappa,\lambda^{2}v_{0,.}^{2}),(-\kappa,\beta),(-\kappa,\beta)}\mathbb{E}\left[l\left(\int_{0}^{T}\rho_{t}v_{0,t}dB_{t}\right)\right]\\
 & +4\omega_{0,T}^{(\kappa,\rho\lambda v_{0,.}^{2}),(\kappa,\rho\lambda v_{0,.}^{2}),(-\kappa,\beta),(-\kappa,\beta)}\mathbb{E}\left[l^{(2)}\left(\int_{0}^{T}\rho_{t}v_{0,t}dB_{t}\right)\right]\,.
\end{align*}
Then, using Lemma \ref{lem:Malliavin_identity}, Lemma \ref{lem:IPP},
and equation \eqref{eq:lV1}, 
\begin{align*}
 & 2\mathbb{E}\left[l\left(\int_{0}^{T}\rho_{t}v_{0,t}dB_{t}\right)\int_{0}^{T}\omega_{t,T}^{(-\kappa,\beta)}\left\{ \int_{0}^{t}\beta_{s}V_{1,s}ds\right\} e^{\int_{0}^{t}\kappa_{z}dz}\lambda_{t}v_{0,t}dB_{t}\right]\\
 & =2\mathbb{E}\left[l^{(1)}\left(\int_{0}^{T}\rho_{t}v_{0,t}dB_{t}\right)\int_{0}^{T}\omega_{t,T}^{(-\kappa,\beta)}e^{\int_{0}^{t}\kappa_{z}dz}\rho_{t}\lambda_{t}v_{0,t}^{2}\left\{ \int_{0}^{t}\beta_{s}V_{1,s}ds\right\} dt\right]\\
 & =2\mathbb{E}\left[l^{(1)}\left(\int_{0}^{T}\rho_{t}v_{0,t}dB_{t}\right)\int_{0}^{T}\omega_{t,T}^{(\kappa,\rho\lambda v_{0,.}^{2}),(-\kappa,\beta)}\beta_{t}V_{1,t}dt\right]\\
 & =2\omega_{0,T}^{(\kappa,\rho\lambda v_{0,.}^{2}),(-\kappa,\beta),(\kappa,\rho\lambda v_{0,.}^{2}),(-\kappa,\beta)}\mathbb{E}\left[l^{(2)}\left(\int_{0}^{T}\rho_{t}v_{0,t}dB_{t}\right)\right]\,,
\end{align*}
and using that $4\omega_{0,T}^{(\kappa,\rho\lambda v_{0,.}^{2}),(\kappa,\rho\lambda v_{0,.}^{2}),(-\kappa,\beta),(-\kappa,\beta)}+2\omega_{0,T}^{(\kappa,\rho\lambda v_{0,.}^{2}),(-\kappa,\beta),(\kappa,\rho\lambda v_{0,.}^{2}),(-\kappa,\beta)}=\left\{ \omega_{0,T}^{(\kappa,\rho\lambda v_{0,.}^{2}),(-\kappa,\beta)}\right\} ^{2}$
(\citet{Benhamou10} p. 34) yields equation \eqref{eq:lV1_2}. 

To conclude the proof, apply Lemma \ref{lem:determinify} and equation
\eqref{eq:EGPBS} to each term of equation \eqref{eq:g(1)simple},
yielding
\begin{eqnarray*}
g(1) & = & P_{BS}\left(x_{0},\int_{0}^{T}v_{0,t}^{2}dt\right)+\mathbb{E}\left[\frac{\partial\tilde{P}_{BS}}{\partial y}\int_{0}^{T}\left(2v_{0,t}V_{1,t}+V_{1,t}^{2}+2v_{0,t}V_{2,t}\right)dt\right]+2\mathbb{E}\left[\frac{\partial^{2}\tilde{P}_{BS}}{\partial y^{2}}\left\{ \int_{0}^{T}v_{0,t}V_{1,t}dt\right\} ^{2}\right]\\
 & = & P_{BS}\left(x_{0},\int_{0}^{T}v_{0,t}^{2}dt\right)\\
 & + & \omega_{0,T}^{(\kappa,\rho\lambda v_{0,.}^{2}),(-\kappa,2v_{0,.})}\frac{\partial^{2}P_{BS}}{\partial x\partial y}\left(x_{0},\int_{0}^{T}v_{0,t}^{2}dt\right)\\
 & + & \omega_{0,T}^{(2\kappa,\lambda^{2}v_{0,.}^{2}),(-2\kappa,1)}\frac{\partial P_{BS}}{\partial y}\left(x_{0},\int_{0}^{T}v_{0,t}^{2}dt\right)\\
 & + & 2\omega_{0,T}^{(\kappa,\rho\lambda v_{0,.}^{2}),(\kappa,\rho\lambda v_{0,.}^{2}),(-2\kappa,1)}\frac{\partial^{3}P_{BS}}{\partial x^{2}\partial y}\left(x_{0},\int_{0}^{T}v_{0,t}^{2}dt\right)\\
 & + & \omega_{0,T}^{(\kappa,\rho\lambda v_{0,.}^{2}),(0,2\rho\lambda v_{0,.}),(-\kappa,2v_{0,.})}\frac{\partial^{3}P_{BS}}{\partial x^{2}\partial y}\left(x_{0},\int_{0}^{T}v_{0,t}^{2}dt\right)\\
 & + & 4\omega_{0,T}^{(2\kappa,\lambda^{2}v_{0,.}^{2}),(-\kappa,v_{0,.}),(-\kappa,v_{0,.})}\frac{\partial^{2}P_{BS}}{\partial y^{2}}\left(x_{0},\int_{0}^{T}v_{0,t}^{2}dt\right)\\
 & + & 2\left\{ \omega_{0,T}^{(\kappa,\rho\lambda v_{0,.}^{2}),(-\kappa,v_{0,.})}\right\} ^{2}\frac{\partial^{4}P_{BS}}{\partial x^{2}\partial y^{2}}\left(x_{0},\int_{0}^{T}v_{0,t}^{2}dt\right)\\
 & + & \mathcal{E}\,,
\end{eqnarray*}
which corresponds to the announced Theorem \ref{thm:IGa_expansion}.\end{proof}

\end{document}